\def\BibTeX{{\rm B\kern-.05em{\sc i\kern-.025em b}\kern-.08em
		T\kern-.1667em\lower.7ex\hbox{E}\kern-.125emX}}
\pgfplotsset{
	tick label style={font=\small},
	label style={font=\small},
	legend style={font=\small}
}
\DeclareMathOperator*{\argmax}{arg\,max}
\DeclareMathOperator*{\argmin}{arg\,min}
\newcommand\abs[1]{\big|#1\big|}
\newcommand\norm[1]{\left\lVert #1\right\rVert}
\newcommand{\Mod}[1]{\ (\mathrm{mod}\ #1)}
\newcommand{\aB}{\abs{\mathcal{B}}}
\newcommand{\bs}{{\boldsymbol{s}}}
\newcommand{\bg}{{\boldsymbol{g}}}
\newcommand{\psdmax}{P} 
\newcommand{\PSDmax}{P} 
\DeclarePairedDelimiter\floor{\lfloor}{\rfloor}
\renewcommand\d[1]{\ensuremath{%
		\;\mathrm{d}#1\@ifnextchar\d{\!}{}}}
\newcommand*\rel@kern[1]{\kern#1\dimexpr\macc@kerna}
\newcommand*\widebar[1]{%
	\begingroup
	\def\mathaccent##1##2{%
		\rel@kern{0.8}%
		\overline{\rel@kern{-0.8}\macc@nucleus\rel@kern{0.2}}%
		\rel@kern{-0.2}%
	}%
	\macc@depth\@ne
	\let\math@bgroup\@empty \let\math@egroup\macc@set@skewchar
	\mathsurround\z@ \frozen@everymath{\mathgroup\macc@group\relax}%
	\macc@set@skewchar\relax
	\let\mathaccentV\macc@nested@a
	\macc@nested@a\relax111{#1}%
	\endgroup
}
\theoremstyle{remark}
\newtheoremstyle{mytheoremstyle} 
{\topsep}                    
{\topsep}                    
{\upshape}                   
{.5em}                           
{\itshape}                   
{.}                          
{.5em}                       
{}  
\theoremstyle{mytheoremstyle}
\newtheoremstyle{iremark}
{\topsep}   
{\topsep}   
{\upshape}  
{0.2in}       
{\itshape}  
{.}         
{5pt plus 1pt minus 1pt} 
{\thmname{#1}\thmnumber{ \itshape#2}\thmnote{ (#3)}} 
\theoremstyle{plain}
\newtheorem{prop}{Proposition}
\newtheorem{lemma}{Lemma}
\newtheorem{definition}{Definition}
\pgfplotsset{compat=1.18} 
\newcommand{\dg}[1]{{\color{black}#1}}
\newcommand{\co}[1]{{\color{black}#1}}
\newcommand{\coold}[1]{{\color{black}#1}}
\begin{document}
    \title{Spectral Efficiency of Low Earth Orbit Satellite Constellations}
    \author{Cuneyd Ozturk,  
        Dongning Guo,  
        Randall A.~Berry,    
        and Michael L.~Honig 
        \thanks{C. Ozturk is with Aselsan Inc., Ankara, 06200, Turkey (E-mail: cuneydozturk@aselsan.com)
        D. Guo, R. Berry and M. L. Honig are with the Department of Electrical and Computer Engineering, Northwestern University, Evanston, IL, 60208 USA (E-mails: \{dguo, rberry, mhonig\}@northwestern.edu)}
        \thanks{{This work was supported in part by the NSF under Grant No.~1910168 and SpectrumX, an NSF Spectrum Innovation Center under Grant No.~2132700.}}
    }
    \maketitle
    \begin{abstract}
        \co{
        This paper investigates the maximum achievable downlink spectral efficiency of low Earth orbit (LEO) satellite constellations. 
        Spectral efficiency is defined \dg{here} as the total network sum rate per unit bandwidth per unit area of Earth's surface. To estimate an upper bound on 
        \dg{spectral efficiency, the problem is reduced} to a 
        single-channel network model, where all satellites and ground terminals operate over a common narrowband frequency channel. Within this 
        \dg{model}, 
        a regular benchmark configuration \dg{is proposed and analyzed, with satellites and terminals arranged} in 
        hexagonal lattices.
        \dg{Numerical results validate that this configuration provides an upper bound on spectral efficiency for multi-channel LEO networks when satellite-terminal associations minimize the total squared link distance. 
        Further improvements are achieved by adjusting}
        association rules to prevent neighboring satellites from simultaneously serving terminals in the same region, highlighting the 
        \dg{critical role} of interference-aware association strategies.
       }
    \end{abstract}
    
    \begin{IEEEkeywords} 
    Channel capacity, interference, spectrum allocation, low Earth orbit (LEO) constellation, throughput.
    \end{IEEEkeywords}
    \section{Introduction}\label{sec:Intro}
        Satellite communications have recently experienced a significant wave of innovation, investment, and competition, with non-geostationary (NGSO) constellations in low Earth orbit (LEO) becoming a central focus. 
        Driven by 
        satellite miniaturization, reduced 
        launch costs, and improvements in communication protocols~\cite{li2022techno}, 
        LEO constellations offer 
        lower latency, higher data throughput, and broader global coverage compared to geostationary systems~\cite{2019_Perez_Neira, 2017_LEO_IoT, 2021_Liu_LEOSat}. In the United States, 
        over twenty companies 
        have applied for
        Federal Communications Commission (FCC) authorization to deploy more than 70,000 LEO satellites across Ku-, Ka-, and V-bands~\cite{2022_Kriezis_USMarketAccessMegaConstNetworks, berry2024spectrum}.

        {Unlike} terrestrial cellular systems, 
        {where an increase in access point} density {enhances frequency reuse so that}
        {throughput} scales nearly linearly 
        in interference-limited regimes~\cite{Andrews16_Fundamental_Limits}, 
        {LEO} networks face geometric constraints: The link distance between a satellite and its ground terminal 
        is lower bounded by the orbital altitude, 
       and adding satellites leads to stronger aggregate interference, limiting spectral efficiency.

        Analyzing the spectral efficiency of LEO constellations is valuable for several key reasons. First, it enables an assessment of the capability of LEO systems to meet the growing global demand for reliable internet connectivity. Second, it offers insights into scalability and performance constraints, which can inform network design and deployment strategies. This, in turn, allows satellite service providers to make better informed investment decisions. Additionally, such analysis may contribute to the development of satellite communication policies and help shape future regulatory frameworks~\cite{berry2024spectrum, hazlett2023open}.
            
        Several prior studies have employed stochastic geometry to model satellite and terminal distributions as Poisson point processes or related frameworks~\cite{2020_Yastrebova_TheoreticalandSimulationBasedAnalysis, 2020_Okati_DLCoverageandRateNAnalysis, 2021_Hourani_AnAnalyticalApproach, 2021_Alouini_StochGeomLEOSatComm, 2022_Jung_SatelliteShadowedRician, 2022_Jia_AnalyticApproachUplinkMegaConst, 2022_Lee_ATractableApproach}. These works primarily focus on analyzing the performance of uplink and downlink LEO satellite networks in terms of key metrics such as coverage probability, average achievable rate, and outage probability. By leveraging stochastic geometry, the authors derive analytical expressions characterizing network performance under spatial randomness and validate their findings through numerical simulations.

        In this study, we adopt spectral efficiency measured in bits per second per Hz per square kilometer as the primary performance metric. It is defined as the total downlink data rate divided by the product of the available bandwidth and the surface area of Earth covered by the satellite constellation. Our goal is to characterize how spectral efficiency scales with satellite density and to identify the key trade-offs that emerge in densely deployed LEO systems.

        This work makes the following contributions and distinctions from the existing literature~\cite{2020_Yastrebova_TheoreticalandSimulationBasedAnalysis, 2020_Okati_DLCoverageandRateNAnalysis, 2021_Hourani_AnAnalyticalApproach, 2021_Alouini_StochGeomLEOSatComm, 2022_Jung_SatelliteShadowedRician, 2022_Jia_AnalyticApproachUplinkMegaConst, 2022_Lee_ATractableApproach}:

        \begin{itemize}
             \item \textcolor{black}{We establish that, for any downlink LEO satellite network, there exists a
             virtual \emph{single-channel network} in which each satellite serves a single terminal over a shared narrowband channel whose spectral efficiency upper bounds that of the original network (see Sec.~\ref{sec:ReductiontoSingleChan}).} 
             
             
             \item \textcolor{black}{ We introduce a stylized \emph{regular configuration} for the single-channel
            network (Sec.~\ref{sec:RegularConfig}), for which the derived spectral efficiency is a closed-form function of the inter-satellite spacing (equivalently, the satellite density) and the transmit power, with the remaining system parameters held fixed. Through simulations, we further show that, when satellite-terminal pairs are associated so as to minimize the total squared link distance, the spectral efficiency of the regular configuration provides a meaningful upper bound on that of randomly generated multi-channel networks.}

            \item \textcolor{black}{ While most prior works adopt a minimum-distance association between satellites and terminals, we show that this heuristic becomes suboptimal in the high-density regime. We therefore propose an alternative beam-to-terminal assignment that accounts for nearest-neighbor interference in addition to link distance, and that can significantly increase the network sum capacity at high densities (see Sec.~\ref{sec:AssociationMax}).}

        \end{itemize}
        
        The remainder of the paper is organized as follows. Sec.~\ref{sec:SystemModel} introduces the system model.
        Sec.~\ref{sec:SingleChan} defines the single-channel network and establishes its relevance as a bound. Sec.~\ref{sec:RegularConfig} presents the regular configuration \textcolor{black}{for the single-channel network}. 
        Sec.~\ref{sec:AssociationMax} explores alternative association strategies. 
        Sec.~\ref{sec:NumRes} 
        presents numerical results 
        and Sec.~\ref{sec:Conc} concludes the paper and discusses directions for future work.


    \section{System Model} \label{sec:SystemModel}

        The \dg{average} downlink spectral efficiency, expressed in units of [bits/s/Hz/km$^2$], is determined by several \dg{major} factors:
        \begin{enumerate}
            \item the total number of satellites and terminals, their spatial distribution, and their association,
             \item the transmission schedule in time and frequency,
            \item 
            satellite transmission power, 
            \item antenna beamwidths of both satellites and terminals,        
            \item antenna boresight orientations.      
        \end{enumerate}
         We provide a comprehensive model 
         of the satellite network, detailing these parameters, and
         express the downlink spectral efficiency as a function thereof 
         to understand their impact on system performance. 

        \subsection{Geometry}
             Terminals are positioned on the Earth's surface, which is modeled as a perfect sphere with a radius of $r_e = 6378$~km. Satellites are placed at an altitude of $h$~km, lying on the surface of a larger concentric sphere with a radius of $r_e + h$~km\footnote{\co{Our framework and analysis are readily applicable to satellite networks operating at multiple orbital altitudes. However, for the sake of notational simplicity and to facilitate a clearer understanding of the core problem, we focus on a single-altitude configuration throughout this work.}}. The coordinate system is centered at the point $[0, 0, 0]$, representing the center of the Earth and simultaneously the center of both spheres.

        \subsection{Satellite-Terminal Associations}
    
            \begin{figure}
                \centering
                \includegraphics[width = \columnwidth]{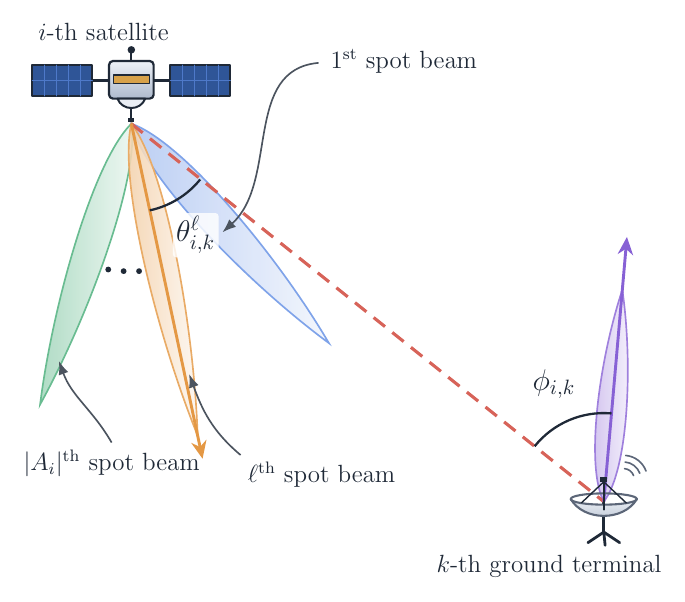}
                \caption{Illustration of 
                satellite spot beams and a terminal \dg{beam}.}
                \label{fig:sat_gs_off-axis}
            \end{figure}

            We assume that each satellite can provide service to multiple terminals by utilizing multiple spot beams as illustrated in Fig.~\ref{fig:sat_gs_off-axis}. We exclude the scenario where a terminal is being served by multiple satellites. In addition, we focus on single-hop satellite-terminal links.
            
            Let $F(\cdot)$ denote the association between the satellites and the terminals, i.e., terminal $k$ is served by satellite $F(k)$.  Also, satellite $i$ serves the set of the terminals $A_i$ so that $F(j) = i$ for $j\in A_i$. Satellite $i$ controls $\abs{A_i}$ independent spot beams where $\abs{\cdot}$ is the cardinality of the set. In addition, $s_k\in\left\{1, \ldots, \abs{A_{F(k)}}\right\}$ denotes the serving spot beam index for ground terminal $k$. Received signals from other spot beams are treated as interference. As a baseline, we consider the association rule that minimizes the total squared link distance between the associated satellite-terminal pairs  (see \cite{2020_Yastrebova_TheoreticalandSimulationBasedAnalysis, 2020_Okati_DLCoverageandRateNAnalysis, 2022_Okati_NonHomeg}).

         \subsection{System Constraints: Time, Frequency, and Power}\label{subsec:TimeandFreq}

            \dg{A}lthough a mega-constellation is continuously in motion, the time-averaged spectral efficiency cannot exceed \dg{that achieved} in its most favorable configuration. Accordingly, we consider \dg{short} time slots 
            (on the order of milliseconds) during which the satellite positions can be assumed \dg{fixed}.
            
            Let $\mathcal{B}$ represent the set of frequency bands allocated for downlink satellite communication, the total bandwidth of which is denoted as            $\abs{\mathcal{B}}$.
            This set comprises            $M$ 
            non-overlapping subbands, 
            denoted             $\mathcal{B}_1, \dots, \mathcal{B}_M$, such that their union is equal to
            $\mathcal{B}$.
           Let $S_i\subseteq \{1, \dots, M\}$ denote the subband indices allocated to satellite $i$, and          $G_k\subseteq \{1, \dots, M\}$ denote those 
           allocated to terminal $k$, which must satisfy           $G_k\subseteq S_{F(k)}$.
           
           During every time slot, the $\ell^{\text{th}}$ spot beam of satellite $i$ must satisfy a constraint on its power spectral density (PSD) 
          \begin{align} \label{eq:Pf}
                \coold{P_i^{\ell}(f)} \leq  \PSDmax , \, 
                \forall \, f\in\mathcal{B}
            \end{align}
           and a constraint on its total transmit power:
           \begin{align} \label{eq:Pmax}
                \int_{\mathcal{B}} P_i^{\ell}(f)\, df \leq P_{\text{max}}
            \end{align}
            where $P_i^{\ell}(f)$ is the PSD of the $\ell^\mathrm{th}$ spot beam for satellite $i$. 

                When links are in close 
                proximity and \dg{cause} 
                significant interference, assigning them to orthogonal subbands can be beneficial. However, 
                subband allocation 
                is 
                a combinatorial \dg{problem} and computationally challenging.
                To address this, we \dg{partition} 
                the Earth's surface 
                into regions, each assigned a dedicated subband, \dg{akin} 
                to frequency planning in cellular networks. 
                A terminal’s Voronoi region is 
                the set of all points on the surface 
                closer to that terminal than to any other. If the center of a predefined subband region lies within a terminal’s Voronoi region, that subband is allocated to the terminal. Each satellite then uses the union of all subbands \dg{allocated to terminals it serves}.
                In Sec.~\ref{sec:NumRes}, we \dg{illustrate this strategy with a hexagonal frequency reuse pattern as a practical example.}

        \subsection{Propagation Model}\label{subsec:Propagation}

            The distance between satellite $i$ and terminal $k$ is denoted by $d_{i,k}$. 
            The path-loss between satellite $i$ and terminal $k$ is modeled as $d_{i,k}^{-\alpha}$, where $\alpha\ge 2$ is the path loss exponent. Referring to Fig.~\ref{fig:sat_gs_off-axis}, let $\theta_{i,k}^{\ell}$ denote the off-axis angle between the 
            boresight of the $\ell^{\mathrm{th}}$ spot beam for satellite $i$ and the direction from satellite $i$ to terminal $k$.   Similarly,  $\phi_{i,k}$ denotes the off-axis angle between the            boresight of terminal $k$ and the direction from terminal $k$ to satellite $i$. 
            
            The antenna pattern of each satellite spot beam is represented by the function $w_s(\cdot)$, while $w_g(\cdot)$ represents the antenna pattern of the terminals. The link gain between the $\ell^{\mathrm{th}}$ spot beam of satellite $i$ and terminal $k$ is modeled as
            \begin{align}
                \Omega_{i, k}^{\ell}(f)\triangleq\psi(f)d_{i,k}^{-\alpha} w_{s}(\theta_{i,k}^{\ell}) w_{g}(\phi_{i,k}),
                \label{eq:Omega}
            \end{align}
           where $\psi(f)$ captures the gain's frequency dependence.

           \dg{Recall that $s_k$ denotes the spot beam index for terminal $k$.} The SINR at ground terminal $k$ is then  
            \begin{align}
                \mathrm{SINR}_k(f) = \frac{P^{s_k}_{F(k)}(f) \Omega_{F(k), k}^{s_k}(f)}{\sum_{(i, \ell)\neq (F(k), s_k)} P_{i}^{\ell}(f)\ \Omega_{i, k}^{\ell}(f) + N_0},
                \label{eq:SINR_kf}
            \end{align}
            where $N_0$ denotes the single-sided PSD of the AWGN (additive white Gaussian noise) at each terminal. 
            The spectral efficiency objective for the \dg{entire} satellite network 
            is
            \begin{align}\label{eq:spec_eff_general}
                R = \frac{1}{4\pi r_e^2 \abs{\mathcal{B}}}\sum_{k} \sum_{m\in G_k}\int_{\mathcal{B}_m} \log_2\left(1 +  \mathrm{SINR}_k(f)  \right)\, df.
            \end{align}
         
         In this work, our goal is to estimate the maximum achievable spectral efficiency as a function of satellite and terminal beamwidths and satellite transmission power constraints. Contrary to the typical assumption in the literature~\cite{2022_Okati_NonHomeg, 2020_Yastrebova_TheoreticalandSimulationBasedAnalysis}, the satellites' boresights are not assumed to be directed at Earth's center. Instead, both the satellites and terminals are allowed flexibility in their \dg{beam} directions, which can lead to improved spectral efficiencies. While we focus on downlink transmissions, a similar  analysis can also be applied to the uplink scenario.
         
          In this study, perfect Doppler compensation is assumed. This is a reasonable idealization for LEO networks when the goal is an upper bound on the spectral efficiency, because the Doppler is dominated by the known motion of the satellites and is deterministic once the terminal position and satellite ephemeris are available. In standardized NTN systems the terminal obtains its position from GNSS and the satellite position and velocity from the broadcast ephemeris. Therefore, the geometric Doppler can be predicted and removed~\cite{3gpp_38821}, with the residual remaining small at high SNR across a wide range of Doppler magnitudes, channel conditions, and elevation angles~\cite{Yeh_2024_Doppler_Compensation}.


        
        To maintain notational simplicity and enhance the clarity of the core analysis, fading effects are initially omitted. However, in Sec.~\ref{sec:NumRes}, we incorporate fading for all satellite–terminal pairs using the Shadowed Rician fading model. The corresponding average spectral efficiency is then evaluated by averaging over multiple fading realizations.

\section{Single-Channel Network}\label{sec:SingleChan}
    
        In this section, a virtual satellite network, referred to as the single-channel network, is introduced to derive an upper bound on the spectral efficiency of the satellite network described in Sec.~\ref{sec:SystemModel}. Although this network is purely hypothetical, it offers valuable information-theoretic insights for estimating the maximum spectral efficiency. 

\subsection{Reduction to Single-Channel Network}
\label{sec:ReductiontoSingleChan}

        \begin{definition}
        A         single-channel network \dg{is one where:} 
            \begin{enumerate}
                \item The association between satellites and terminals is one-to-one, with each satellite serving its terminal through a single beam;
                \item All transmissions use the same \dg{narrow-band} channel. 
            \end{enumerate}
        \end{definition}

            \begin{prop}\label{Prop:single_channel_reduction}
                For any satellite network described in Sec.~\ref{sec:SystemModel}, there exists a single-channel network, subject to the same PSD constraint, whose spectral efficiency 
                upper bounds 
                that of the original network.
            \end{prop}
            \begin{proof}
                First, without limiting the number of terminals, we can replace terminal $k$ with $\abs{G_k}$ co-located terminals,  each served by a different subband allocated to satellite $F(k)$. This operation does not affect \eqref{eq:spec_eff_general}. Applying this procedure to all terminals gives a network in which each terminal is served by a single subband.
                
                Second, without limiting the number of satellites, we can replace satellite $i$ with $\abs{S_i}\times\abs{A_i}$ satellites at the same position, each with a single spot beam assigned a single subband and subject to the same power constraints~\eqref{eq:Pf} and~\eqref{eq:Pmax}. This is transparent to the receivers since their received signals do not change. Hence, as far as maximizing the spectral efficiency is concerned, it suffices to consider one dedicated terminal for each satellite. 
                
                The entire network is partitioned into $M$ subnetworks, each consisting of distinct satellite–terminal pairs operating on a separate subband. Since the overall spectral efficiency is the average across all subbands, it is necessarily upper bounded by the subband achieving the highest spectral efficiency.
            \end{proof}
            
            Proposition~\ref{Prop:single_channel_reduction} remains valid even when fading and/or frequency-dependent gain variations across satellite–terminal pairs are incorporated into the model. In the presence of fading, the spectral efficiency expression in \eqref{eq:spec_eff_general} should be replaced by its expected value 
            over 
            fading realizations.

            \dg{Additionally, although the analysis assumes uniform satellite altitudes, the single-channel reduction is independent of this assumption and remains valid with varying altitudes.}
            
            As a consequence of Proposition~\ref{Prop:single_channel_reduction}, the maximum spectral efficiency achieved among all single-channel networks provides an upper bound for any satellite network. \dg{Thus, we focus on single-channel networks to establish this bound.} 
    
             Under the assumptions of a single spot beam and narrowband operation, the spot beam index $\ell$ and the frequency dependence $f$ are omitted. 
             The link gain between satellite $i$ and terminal $k$ is expressed as: 
            \begin{align}
                \Omega_{i,k} =  d^{-\alpha}_{i, k}w_{s}(\theta_{i,k}) w_{g}(\phi_{i,k}),
            \end{align}
            where the common frequency-flat gain $\psi$ in~\eqref{eq:Omega}  has been absorbed into the noise by defining $\sigma^2 \triangleq N_0/\psi$. 
            Under narrowband \dg{operation}, 
            only the PSD constraint~\eqref{eq:Pf} \dg{applies, simplifying} 
            to $P_i\le P$.
            
            Since the number of satellites \dg{equals the number of} terminals in this virtual network, their locations are represented by  $(\bs_{i})_{i=1}^{N}$  and $(\bg_{k})_{k=1}^{N}$, respectively, where $N$ denotes the total number. 
            The 
            \dg{corresponding} $F(\cdot)$ 
            is a permutation of $\{1,\dots,N\}$ 
            \dg{that} minimizes the total squared distance:
            \begin{align}\label{eq:assoccost}
                \sum_{k=1}^{N} \norm{\bs_{F(k)}-\bg_{k}}^2
            \end{align}
            where $\norm{\cdot}$ denotes Euclidean norm.  The Hungarian algorithm~\cite{1955_Kuhn_HungarianMethod} can be used to determine $F(\cdot)$. We then \dg{reindex} the 
            terminals so that $F(k) = k$.

        \subsection{Optimal Power Allocation under \dg{Symmetry}}\label{sec:Opt_Power_Allocation}

    Let $\mathcal{I}_k$ denote the interfering satellites for terminal $k$ (all satellites above its horizon except the serving one). The single-channel SINR is
    \begin{align}\label{eq:SINR_k}
    \mathrm{SINR}_k=\frac{P_k\,\Omega_{k,k}}{\sum_{i\in\mathcal{I}_k}P_i\,\Omega_{i,k}+\sigma^2}.
    \end{align}
  The following proposition gives conditions \dg{under} which transmitting at full power is locally optimal.


\begin{prop}\label{Prop:fullpower}
    The allocation $P_k=\psdmax$ for all $k$ is a strict local maximizer of the spectral efficiency under the following symmetric conditions:
    \begin{enumerate}
        \item The direct-link gains are equal, $\Omega_{k,k}=\Omega_{1,1}$ for all $k$, and
        \item There exists a common value $\bar I$ such that, for every $k$,
        \begin{align}\label{eq:reciprocity}
            \sum_{i\in\mathcal I_k}\Omega_{i,k}=\!\!\sum_{m:\,k\in\mathcal I_m}\!\!\Omega_{k,m}=\bar I.
        \end{align}
    \end{enumerate}
        
\end{prop}
    The proof of Proposition~\ref{Prop:fullpower} is provided in Appendix~\ref{sec:proofprop2}.

    We emphasize that Proposition~\ref{Prop:fullpower} is a \emph{local} statement. Global optimality of full power additionally requires sufficiently weak interference. In the high-density limit the dominant interferers approach the desired-signal strength, and selectively silencing transmitters can then be beneficial. Accordingly, we adopt full-power transmission as a benchmark.

        \subsection{Planar Approximation}\label{subsec:Flatplaneapprox}
            In scenarios where interference at a given terminal is primarily caused by nearby satellites within a localized region, the satellite positions can be approximated as lying on a plane. This planar approximation becomes increasingly accurate 
            \dg{for dense} satellite networks 
            \dg{with narrow} beamwidths \dg{for} both satellites and terminals.  Accordingly, we adopt a simplified model in which satellites and terminals are assumed to reside in parallel planes separated by a distance of $h$~km. 

            For the planar approximation, we project satellite locations to the two-dimensional plane $\{[x, y, r_e+h] \mid x, y \in\mathbb{R}\}$. Similarly, we project terminal locations to $\{[x, y, r_e] \mid x, y \in\mathbb{R}\}$. 
             For example, if a satellite is located at $[x, y, z]$ according to the spherical model, its projected location is given by 
            \begin{align}
            \dg{                [x, y, z] \rightarrow [x (r_e +h) z^{-1}, \, y(r_e +h)z^{-1}, \, r_e +h]. }
            \end{align}


            \begin{figure}
                \centering
                \includegraphics[width = \columnwidth]{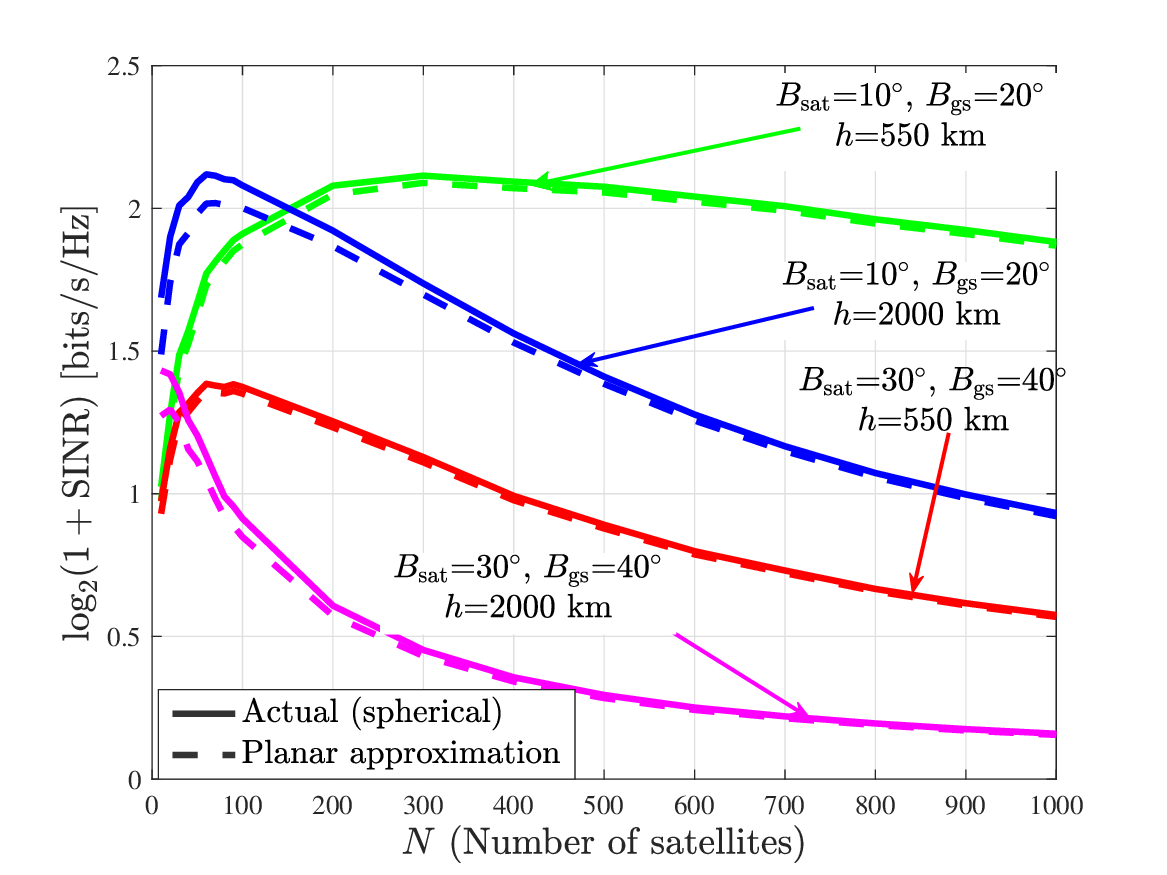}
                \caption{The individual rate [bits/s/Hz] at the terminal located at $[0, 0, r_e]$ versus the number of  satellites within the field-of-view ($N$) for the BPP model and planar approximation when $h\in \{550, 2000\}$~km.
                }
                \label{fig:flatvsBPP}
            \end{figure}

            We 
            \dg{next demonstrate that the planar approximation is quite accurate for parameters of interest. Suppose}
            $N$ satellites in the field-of-view of the terminal are generated according to the BPP.  
            Any satellite whose zenith angle is smaller than $\theta_f$ lies in the field-of-view of the terminal, where
            \begin{align}
                \theta_f\triangleq \cos^{-1}\left({r_e}/(r_e +h)\right).
            \end{align}
            The location of a satellite lying on the field-of-view of the terminal located at $[0, 0, r_e]$ can be expressed as
            \begin{align}
                (r_e+h) [\sin\zeta\cos\varphi~~ \sin\zeta\sin\varphi~~\cos\zeta]
            \end{align}
            for $0\le \varphi \le 2\pi$ and $0\le \zeta \le \theta_f$. From~\cite[Prop.~1]{2022_Alouini_EvaluatingAccuracyofStochGeomLEO}, $\varphi$ \dg{is uniformly distributed on} 
            $[0, 2\pi]$ and the cumulative distribution function of $\zeta$ is given as
            \begin{align}
                F_{\zeta}(\theta) = \frac{1- \cos \theta}{1-\cos \theta_f},\;\; 0\leq \theta \leq \theta_f.
            \end{align}
            Similarly, we generate $(N-1)$ additional terminals on the Earth's surface according to the BPP, in addition to a terminal located at $[0, 0, r_e]$. The zenith angles of these terminals are restricted to the interval $[0, \theta_f]$, while their azimuth angles are uniformly distributed over $[0, 2\pi]$. Associations between satellites and terminals are established by minimizing the total distance metric defined in \eqref{eq:assoccost}. The 
            boresights of each satellite–terminal pair are assumed to be aligned.
            
            In this example, all satellites transmit at a power spectral density (PSD) level of  $\psdmax$. We evaluate two configurations for the satellite and terminal beamwidths, specifically $(B_{\mathrm{sat}}, B_{\mathrm{gs}}) = (10^\circ, 20^\circ)$ and $(30^\circ, 40^\circ)$, where $B_{\mathrm{sat}}$ and $B_{\mathrm{gs}}$ are the boresight-to-first-null angles\footnote{ The narrower setting ($B_{\mathrm{sat}} = 10^\circ$) corresponds to a half-power beamwidth of $\approx 8.4^\circ$, comparable to the $8.83^\circ$ half-power ($3$\,dB) beamwidth of the 3GPP NTN LEO ``Set-2'' configuration~\cite{3gpp_38821, 3gpp_beam_pattern}, while the wider setting is conservative.}.
            Antenna patterns for both 
            are generated according to the model 
            in Sec.~\ref{subsec:AntennaPattern}. The SNR of the serving link is set to $8~$dB.

            
            Fig.~\ref{fig:flatvsBPP} \dg{plots the} averages of $\log_2(1 + \mathrm{SINR})$ over satellite and terminal locations at the fixed terminal location $[0, 0, r_e]$ for both the BPP model and planar approximation when $h\in\{550, 2000\}$~km. The results indicate that the planar model \dg{is a close approximation}. 
            Hence all subsequent analysis is carried out using the planar model.  As a result, we assume that the satellites and terminals extend to the infinite horizon ($N$ is infinite), so that their density becomes the parameter of interest.

            In summary, to estimate the maximum spectral efficiency of the satellite network in Sec.~\ref{sec:SystemModel}, we proceed as follows:
            \begin{enumerate}
                \item Reduce the problem to determine the maximum spectral efficiency of a single-channel network (Proposition~\ref{Prop:single_channel_reduction}),
                \item Derive a set of sufficient conditions under which full-power transmission is optimal in the single-channel network (Proposition~\ref{Prop:fullpower}),
                \item Approximate the spherical geometry with a planar model, numerically \dg{validate its accuracy for relevant} 
                satellite altitudes. 
            \end{enumerate}
            
        Next, we consider a \textit{regular configuration} for deployment of  satellites and terminals in the single-channel network.

    \section{Regular Configuration}
    \label{sec:RegularConfig}

    \begin{definition}
    \label{def:reg_placement}
        Let $\Delta>0$ denote \dg{the 
        distance between any satellite and its nearest neighbor, hereafter} referred to as inter-satellite spacing. We say the satellites are regularly placed if their locations are
           \begin{align} \label{eq:sijreg}
           \bs_{i, j}^{\text{reg}}\triangleq
           [ i \Delta/2,\, j \Delta \sqrt{3}/2,\, r_e + h ],
           \quad
           i=j \Mod2.      
           \end{align}
            and the terminals are regularly placed if their locations are
            \begin{align} \label{eq:gijreg}
               \bg_{i, j}^{\text{reg}} \triangleq
               [  i \Delta/2,\, j \Delta \sqrt{3}/2,\, r_e ],
               \quad
               i=j \Mod2. 
            \end{align}
         \dg{Here a satellite or terminal is indexed by a pair of integers $(i,j)$, which are both either odd or even. Moreover, we call it a {\em regular configuration} if every terminal $(i,j)$ is associated with satellite $(i,j)$, with their antennas pointing directly to each other.}
    	\end{definition}
        
        \begin{figure} 
    		\centering
    		\includegraphics[width = \columnwidth]{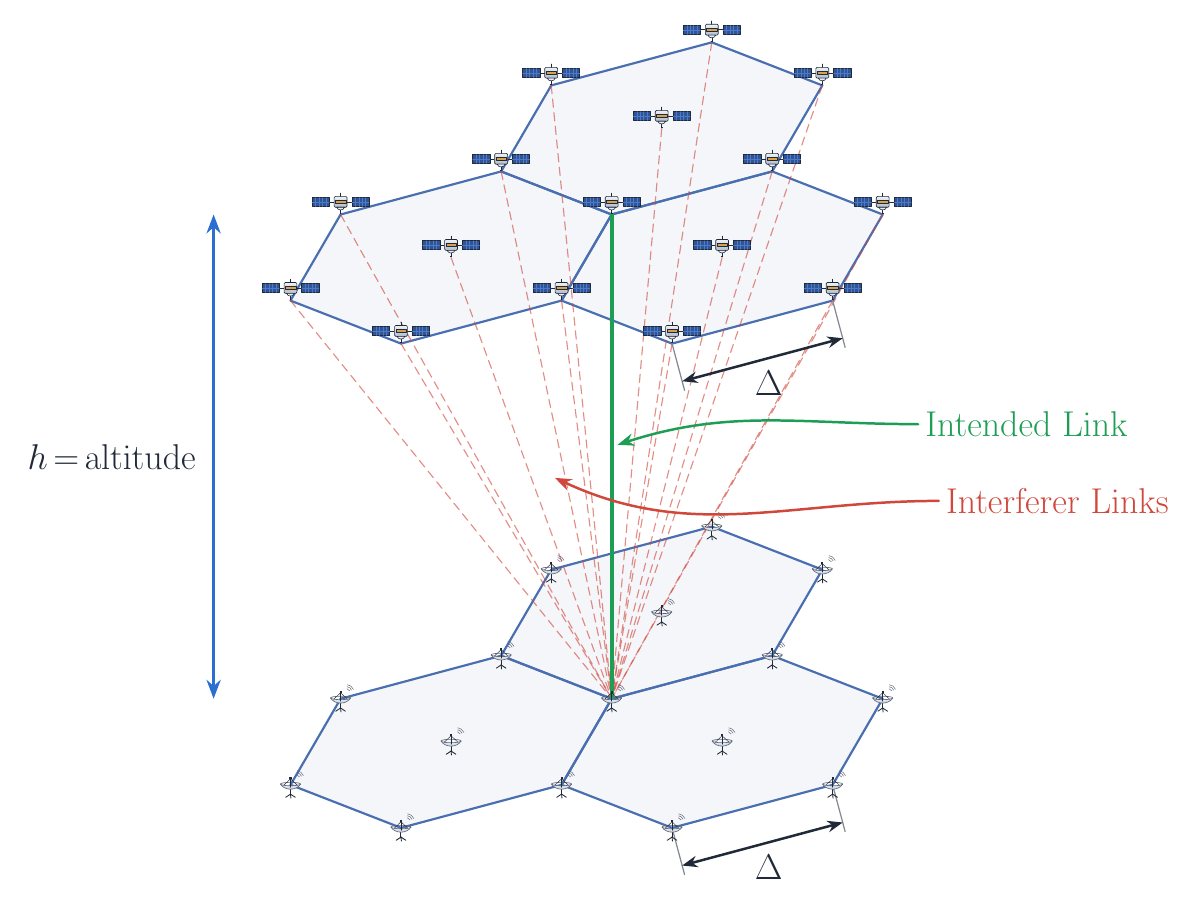}
    		\caption{Regular configuration of satellites and terminals in the planar model. The serving satellite is directly above the associated terminal.}
    		\label{fig:reg-placement}
    	\end{figure}	
        
        

       The regular configuration\dg{, illustrated in Fig.~\ref{fig:reg-placement},}  minimizes the length of every direct link (each equal to the altitude $h$), thereby maximizing the received SNR at every terminal, and it maximizes the minimum inter-satellite distance, so the
        dominant (nearest) interferers are pushed as far away as the density allows.
        With the rapidly decaying antenna gains considered here, this tends to suppress
        the aggregate interference. We verify numerically in Sec.~VI that the regular
        configuration outperforms randomly placed (BPP) networks under minimum-squared-distance association.
        

        \subsection{Spectral Efficiency}
        \label{subsec:TheoResults}

         Conditions~1 and~2 in Proposition~\ref{Prop:fullpower} are met by the regular configuration. All direct links have gain $h^{-\alpha}$, and by its translational and up-down symmetry every terminal's incoming interference and every satellite's outgoing interference equal a common value $\bar I$. Hence full power is a strict local maximizer in the regular configuration. We assume that every satellite transmits at the maximum PSD $\psdmax$. 
    
        By the translational symmetry of the lattice, all terminals see the same geometry and therefore share a common SINR. It thus suffices to analyze a reference terminal positioned at $[0, 0, r_e]$, served by the satellite directly above it. Specializing the single-channel SINR~\eqref{eq:SINR_k} to this terminal, the direct link has length $h$, while the interferers are the lattice satellites
            \begin{align}\label{eq:Iset}
                \mathcal{I} \triangleq \left\{(i, j) \mid i, j \in \mathbb{Z}, \, i = j \Mod{2} \right\} \setminus \{(0, 0)\},
            \end{align}
         each seen at slant range
        $D^{\mathrm{reg}}_{i,j}(\Delta)\triangleq
        \bigl\lVert \bs^{\mathrm{reg}}_{i,j}-\bg^{\mathrm{reg}}_{0,0}\bigr\rVert$,
        so that $D^{\mathrm{reg}}_{0,0}(\Delta)=h$. Since each satellite points directly downward and each terminal directly upward, the satellite-side and terminal-side off-axis angles coincide, taking the common value
        \begin{align}
        \theta^{\mathrm{reg}}_{i,j}(\Delta)
        = \cos^{-1}\!\left(h\big/D^{\mathrm{reg}}_{i,j}(\Delta)\right).
        \label{eq:reg_angle}
        \end{align}
        The regular placement~\eqref{eq:sijreg}--\eqref{eq:gijreg} forms a hexagonal lattice with nearest-neighbor distance $\Delta$, so each terminal occupies a fundamental cell of area $\Delta^2\sqrt{3}/2$. The resulting spectral efficiency is
        \begin{align}\label{eq:throughputreg}
        R^{\mathrm{reg}}(P,\Delta)
        = \frac{2}{\Delta^2\sqrt{3}}\,
        \log_2\!\left(1+\frac{\gamma(P)}{\eta(P,\Delta)+1}\right),
        \end{align}
        where
        \begin{align}
        \gamma(P) &\triangleq P\,\sigma^{-2} h^{-\alpha}, \label{eq:gamma}\\
        \eta(P,\Delta) &\triangleq \frac{P}{\sigma^2}
        \sum_{(i,j)\in\mathcal{I}}
        \bigl(D^{\mathrm{reg}}_{i,j}(\Delta)\bigr)^{-\alpha}\,
        w_s\!\left(\theta^{\mathrm{reg}}_{i,j}(\Delta)\right)
        w_g\!\left(\theta^{\mathrm{reg}}_{i,j}(\Delta)\right).
        \label{eq:eta}
        \end{align}
        The following proposition characterizes the behavior of the spectral efficiency in the regular configuration as the satellite density increases.
            
            \begin{prop}\label{Prop:reginfty}
                As $\Delta \to 0$, the spectral efficiency $R^{\text{reg}}(\psdmax, \Delta)$ converges to a finite, strictly positive constant that depends on the transmission power $\psdmax$, noise power $\sigma^2$, altitude $h$, path loss exponent $\alpha$, and the beamwidths of both satellite and terminal antennas.
            \end{prop}
            The proof of this result is provided in Appendix~\ref{sec:proofprop3}. Proposition~\ref{Prop:reginfty} reveals that spectral efficiency does not grow indefinitely as the inter-satellite spacing $\Delta$ decreases. Instead, it saturates in the high-density regime, implying that the maximum achievable spectral efficiency occurs at a finite satellite density. Although an analytical expression for the optimal spacing $\Delta$ is difficult to extract directly from~\eqref{eq:throughputreg}, the optimization can be performed efficiently through numerical methods, since the objective function depends only on a single variable.

            Specifically, in the high-density regime, the spectral efficiency for both random and regular configurations can be approximated by
            \begin{align}\label{eq:Rcont}
                R^{\text{cont}}(\psdmax, \Delta) \triangleq  \frac{2}{\Delta^2 \sqrt{3}} \log_2\left(1 + \frac{P h^{-\alpha}}{\sigma^2 + I^{\text{cont}}(\psdmax, \Delta)}\right),
            \end{align}
            where the continuous interference approximation $I^{\text{cont}}(\psdmax, \Delta)$ is given by
            \begin{align}
                I^{\text{cont}}(\psdmax, \Delta) \triangleq \frac{P}{\Delta^2 \sqrt{3}/2} \int_{-\infty}^{\infty} \int_{-\infty}^{\infty} I(x, y) \, dx \, dy,
            \end{align}
            with the integrand defined as
            \begin{align}
                I(x, y) \triangleq \left(x^2 + y^2 + h^2\right)^{-\alpha/2} w_s\left( \kappa(x, y) \right) w_g\left( \kappa(x, y) \right),
            \end{align}
            and the angle function $\kappa(x, y)$ given by
            \begin{align}
                \kappa(x, y) \triangleq \cos^{-1}\left(h/{\sqrt{x^2 + y^2 + h^2}}\right).
            \end{align}
           In Sec.~\ref{sec:NumRes}, the spectral efficiency of the regular configuration in the high-density regime is compared with the continuous approximation provided in~\eqref{eq:Rcont}.

    \subsection{Constrained Optimality of Regular Configuration}\label{subsec:OptPlacementGroundStats}
        

        Proving the optimality of the regular configuration under a distance-based association rule presents significant analytical challenges. Here we instead provide an optimality result 
        under a constrained scenario where the network consists of regularly spaced satellites. We then identify the terminal placements that maximize spectral efficiency. Those placements correspond to the regular configuration.
        
        In this setting, each satellite transmits at full power, denoted by $\psdmax$, \dg{with} its antenna  oriented 
        directly downward. Similarly, all terminals are 
        oriented directly upward. 
        \dg{Thus,} the boresight axes of both satellite and terminal antennas are aligned perpendicular to the ground plane.
        Under these conditions, we show that spectral efficiency is maximized when each terminal is positioned directly beneath its associated satellite.
      
       Let $\bg = [x, y, r_e]$ denote the position of a terminal. For any satellite indexed by $(i, j)$ satisfying $i = j \Mod{2}$, define the Euclidean distance between the satellite and the terminal as
        \[
        D_{i,j}(x, y) \triangleq \norm{\bs_{i,j}^{\text{reg}} - \bg}.
        \]
        The combined attenuation due to the satellite and terminal beam patterns is defined as
        \[
        W_{i,j}(x, y) \triangleq w_s(\theta_{i,j}(x, y)) \, w_g(\theta_{i,j}(x, y)),
        \]
        where the off-axis angle $\theta_{i,j}(x, y)$ is given by
        \begin{align}
            \theta_{i,j}(x, y) = \cos^{-1} \left( h/{D_{i,j}(x, y)} \right).
        \end{align}
        Without loss of generality, we assume that the terminal at $\bg$ is served by the satellite located at $[0, 0, r_e + h]$. Under this association, the signal-to-interference-plus-noise ratio (SINR) at the terminal is expressed as
        \begin{align}
            \mathrm{SINR}(x, y) = \frac{D_{0,0}^{-\alpha}(x, y) \, W_{0,0}(x, y)}{
            \sum_{(i, j) \in \mathcal{I}} D_{i,j}^{-\alpha}(x, y) \, W_{i,j}(x, y) + \sigma^2 / \psdmax},
        \end{align}
        where $\mathcal{I}$ is the interference index set defined in~\eqref{eq:Iset}.
        
        \begin{prop}\label{Prop:regopt}
        The SINR function $\mathrm{SINR}(x, y)$ is maximized at $(x, y) = (0, 0)$ if the following conditions are satisfied:
        \begin{enumerate}
            \item The product of antenna gains, $w_s(\theta) w_g(\theta)$, is a non-increasing function of $\theta \in [0, \pi/2]$.
            \item The ratio $W_{i,j}(x, y) / W_{0,0}(x, y)$ is convex in $x$ and $y$ over the region $-\Delta/2 \leq x \leq \Delta/2$, $-\Delta \sqrt{3}/2 \leq y \leq \Delta \sqrt{3}/2$, for any $(i, j)$ such that $i = j \Mod{2}$.
        \end{enumerate}
        \end{prop}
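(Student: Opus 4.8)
The plan is to recast the claim as minimizing $1/\mathrm{SINR}(x,y)$ over $(x,y)$. Writing $\rho(d)\triangleq d^{-\alpha}\,w_s\big(\cos^{-1}(h/d)\big)\,w_g\big(\cos^{-1}(h/d)\big)$, one has $D_{i,j}^{-\alpha}(x,y)\,W_{i,j}(x,y)=\rho\big(D_{i,j}(x,y)\big)$ for every index pair, $(0,0)$ included, so that
\begin{align*}
\frac{1}{\mathrm{SINR}(x,y)}
= \frac{\sigma^2/\psdmax}{\rho\big(D_{0,0}(x,y)\big)}
+ \sum_{(i,j)\in\mathcal I}\frac{\rho\big(D_{i,j}(x,y)\big)}{\rho\big(D_{0,0}(x,y)\big)}
=: T_0(x,y)+\sum_{(i,j)\in\mathcal I}T_{i,j}(x,y).
\end{align*}
Condition~1 makes $d\mapsto w_s(\cos^{-1}(h/d))\,w_g(\cos^{-1}(h/d))$ nonincreasing, since $\cos^{-1}(h/d)$ increases for $d\ge h$; hence $\rho$ is strictly decreasing. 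As $D_{0,0}(x,y)=\sqrt{x^2+y^2+h^2}\ge h$ with equality only at the origin, the noise term $T_0$ is minimized at $(0,0)$, and it remains to show $\sum_{(i,j)\in\mathcal I}T_{i,j}(x,y)\ge\sum_{(i,j)\in\mathcal I}T_{i,j}(0,0)$.

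First I would reduce to the box $R=[-\Delta/2,\Delta/2]\times[-\Delta\sqrt3/2,\Delta\sqrt3/2]$ of Condition~2, which contains the Voronoi cell of the origin. If a terminal at $(x,y)\notin R$ is re-associated with its nearest lattice satellite $\bs^{\text{reg}}_{v}$, the serving power $\rho(D_v)$ can only increase (as $D_v\le D_{0,0}$ and $\rho$ is decreasing), and cross-multiplying the two SINR expressions shows the resulting SINR is no smaller than the original despite the changed interference set; translating by $-v$ — a lattice automorphism sending $v$ to the origin — then puts the problem inside $R$ with serving satellite at $(0,0)$. Within $R$ I would exploit the reflection symmetry of the hexagonal lattice: $i\equiv j\pmod 2$ holds for $(i,j)$ iff it holds for each of $(\pm i,\pm j)$, and $D_{i,j}(x,y)=\sqrt{(x-i\Delta/2)^2+(y-j\Delta\sqrt3/2)^2+h^2}$ gives $D_{-i,j}(x,y)=D_{i,j}(-x,y)$ and $D_{i,-j}(x,y)=D_{i,j}(x,-y)$, hence $T_{-i,j}(x,y)=T_{i,j}(-x,y)$, and so on. Grouping $\mathcal I$ into orbits $\{(i,j),(-i,j),(i,-j),(-i,-j)\}$ (pairs or singletons when an index vanishes), an orbit's contribution at $(x,y)$ is $\tfrac14[T_{i,j}(x,y)+T_{i,j}(-x,y)+T_{i,j}(x,-y)+T_{i,j}(-x,-y)]$ while at the origin it equals $T_{i,j}(0,0)$; if $T_{i,j}$ is convex in $x$ for each fixed $y$ and convex in $y$ for each fixed $x$ on $R$, then two midpoint-convexity steps in $x$ and one in $y$ give that the orbit average is $\ge T_{i,j}(0,0)$, and summing over orbits finishes the argument.

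The crux — and where Condition~2 and the hypothesis $\alpha\ge 2$ are used — is to establish that each
\[
T_{i,j}(x,y)=\Big(\frac{D_{0,0}(x,y)}{D_{i,j}(x,y)}\Big)^{\!\alpha}\,\frac{W_{i,j}(x,y)}{W_{0,0}(x,y)}
\]
is separately convex in $x$ and in $y$ on $R$. Condition~2 gives the convexity of the antenna-pattern ratio $W_{i,j}/W_{0,0}$, but a product of positive convex functions need not be convex and the geometric factor $(D_{0,0}/D_{i,j})^\alpha$ is not convex on all of $\mathbb{R}^2$; I expect this to be the main obstacle. I would attack it by differentiating $T_{i,j}$ twice in $x$ (and symmetrically in $y$), using $D_{0,0}^2=x^2+y^2+h^2$ and $D_{i,j}^2=(x-i\Delta/2)^2+(y-j\Delta\sqrt3/2)^2+h^2$, writing $\partial_{xx}T_{i,j}$ as the sum of $(D_{0,0}/D_{i,j})^\alpha\,\partial_{xx}(W_{i,j}/W_{0,0})\ge0$, a geometric Hessian term, and a cross term, then showing the last two are controlled on $R$ via $h\le D_{0,0},D_{i,j}$ and $\alpha\ge 2$ (the restriction to $R$ keeping the inflection region of the geometric factor out of range). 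As a consistency check one can specialize to a $\cos^{\beta}$ antenna product, for which $W_{i,j}/W_{0,0}=(D_{0,0}/D_{i,j})^{\beta}$ and $T_{i,j}=(D_{0,0}/D_{i,j})^{\alpha+\beta}$ collapses to a single explicit power of the distance ratio whose convexity on $R$ can be checked directly.
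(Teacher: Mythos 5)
Your setup is sound and your first two stages essentially reproduce the paper's argument: the reduction to $1/\mathrm{SINR}$, the observation that the noise term is minimized at the origin, and the reduction to the box $R$ via re-association to the nearest satellite followed by a lattice translation is exactly the content of the paper's Lemma~1 (which carries it out by showing $\mathrm{SINR}(x-\Delta,y)\ge\mathrm{SINR}(x,y)$ for $x\ge\Delta/2$, etc.). The reflection-symmetry/orbit decomposition inside $R$ is also the right idea. However, there is a genuine gap at precisely the point you flag as ``the crux'': your argument needs each $T_{i,j}=(D_{0,0}/D_{i,j})^{\alpha}\,(W_{i,j}/W_{0,0})$ to be separately convex in $x$ and in $y$ on $R$, and you do not establish this. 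Condition~2 gives convexity only of the factor $W_{i,j}/W_{0,0}$; the cross term $2\,\partial_x\!\left[(D_{0,0}/D_{i,j})^{\alpha}\right]\partial_x\!\left[W_{i,j}/W_{0,0}\right]$ in $\partial_{xx}T_{i,j}$ does not have an obvious sign uniformly over $R$ and over all $(i,j)$ (for $x<0$ and a satellite at $i>0$, both $D_{0,0}$ and $D_{i,j}$ decrease in $x$, so the monotonicity of each ratio is ambiguous), and your plan to ``control'' it via $h\le D_{0,0},D_{i,j}$ and $\alpha\ge2$ is not carried out. Since the proposition's hypotheses were evidently chosen so that this product convexity need \emph{not} be proved, leaving this step as a plan rather than a proof is a real hole, not a routine verification.

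The paper closes exactly this hole differently. First, its Lemma~2 proves by explicit algebra that $\rho_{i,j}(x,y)=\left(D_{0,0}(x,y)/D_{i,j}(x,y)\right)^2$ is convex in $x$ and in $y$ on $R$, whence $\rho_{i,j}^{\alpha/2}$ is convex for $\alpha\ge2$ by composition with the increasing convex map $t\mapsto t^{\alpha/2}$. Then, rather than multiplying the two convex factors and hoping the product is convex, it pairs each satellite $(i,j)$ with its mirror image $(-i,j)$ and applies the rearrangement (Chebyshev) inequality: because $W_{i,j}(x,y)/W_{0,0}(x,y)$ and $\rho_{i,j}^{\alpha/2}(x,y)$ are \emph{similarly ordered} under the reflection $x\mapsto -x$ (both are larger on the side closer to the satellite, using Condition~1), one gets
\begin{align*}
\sum_{\pm}\frac{W_{\pm i,j}(x,y)}{W_{0,0}(x,y)}\,\rho_{\pm i,j}^{\alpha/2}(x,y)
\;\ge\;\tfrac12\Bigl(\sum_{\pm}\tfrac{W_{\pm i,j}}{W_{0,0}}\Bigr)\Bigl(\sum_{\pm}\rho_{\pm i,j}^{\alpha/2}\Bigr)
\;\ge\;2\,\frac{W_{i,j}(0,y)}{W_{0,0}(0,y)}\,\rho_{i,j}^{\alpha/2}(0,y),
\end{align*}
where the last step uses the convexity of each factor separately. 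Repeating in $y$ finishes the proof. So to complete your argument you should either supply the missing proof that $T_{i,j}$ itself is separately convex (which may require hypotheses beyond Conditions~1--2), or replace the product-convexity step with this pairing-plus-rearrangement device, together with the explicit convexity computation for $\rho_{i,j}$ that your sketch also presupposes.
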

        
        The proof of Proposition~\ref{Prop:regopt} is provided in Appendix~\ref{sec:proofprop4}.
        
        When the combined beam attenuation $W_{i,j}(x, y)$ satisfies the conditions stated in Proposition~\ref{Prop:regopt}, the optimal placement of the terminal served by the satellite at $[0, 0, r_e + h]$ is directly beneath it, at $[0, 0, r_e]$. By symmetry, this implies that all other terminals should likewise be positioned directly below their associated satellites in order to maximize SINR.
        
        In practice, antenna gain patterns \dg{may not be} 
        monotonic due to 
        sidelobes and nulls, \dg{so} 
        Proposition~\ref{Prop:regopt} may not \dg{always apply}. 
        However, monotonic envelope functions \dg{bounding} 
        the actual 
        patterns 
        can \dg{lead to} 
        bounds on 
        interference and 
        spectral efficiency. We also \dg{numerically} verify 
        that the second condition in Proposition~\ref{Prop:regopt} holds for the antenna 
        patterns 
        in Sec.~\ref{sec:NumRes}.

\section{Shuffling Associations Can Increase 
Efficiency}\label{sec:AssociationMax}

    Up to this point, satellites have been associated with terminals by minimizing a distance-based metric, as in prior works~\cite{2020_Yastrebova_TheoreticalandSimulationBasedAnalysis, 2020_Okati_DLCoverageandRateNAnalysis, 2022_Okati_NonHomeg}. 
    In this section, we \dg{continue to assume a single-channel network and} demonstrate that spectral efficiency can be further enhanced by modifying the association strategy and adjusting the antenna look directions. 
    
    \dg{Antenna gains decay rapidly as the receiver moves away from the boresight. In particular,} 
    two links 
    interfere strongly if both their satellites and their terminals are close neighbors. Consider \dg{the two different associations} in Fig.~\ref{fig:look-dir}.
    \dg{On the left,} every link has one or two strong interferers; \dg{on the right}, the association eliminates strong interference. Notably, 
    \dg{switching associations by slightly reorienting antennas increases the distance by a small factor, reducing direct-link signal strength by a few dB while dramatically reducing interference.}
    
    \begin{figure}
        \centering
        \includegraphics[width = \columnwidth]{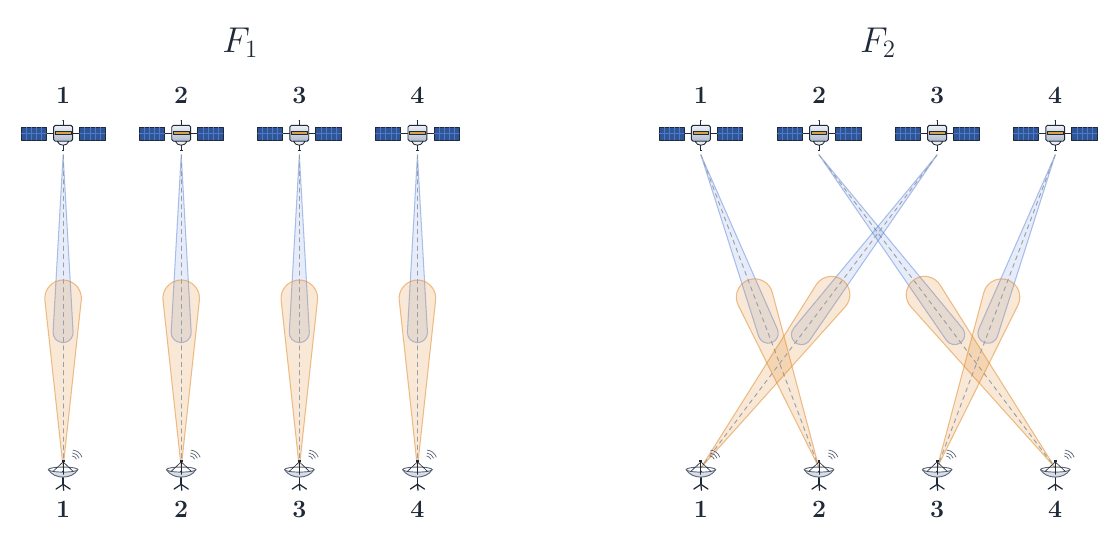}
        \caption{Comparison of satellite–terminal associations. The left configuration minimizes the satellite-to-terminal distance, while the right configuration reduces interference at each terminal.}
        \label{fig:look-dir}
    \end{figure}

    To mitigate strong interference, we adopt a simple association principle. Satellites that lie close to one another are assigned to terminals that are far apart, so that strongly interfering links do not coexist. 
    \dg{Next, we present} 
    an algorithm 
    \dg{to enhance} the spectral efficiency of networks with regularly placed satellites and terminals \dg{using an}
    interference-aware association. 


        We begin with the regular configuration \dg{in one dimension}, 
        as 
        in Fig.~\ref{fig:look-dir}, \dg{restricting} 
        the 
        number of satellites 
        to $2^j$ for 
        natural number $j$. Given a satellite index $k \in \{1, \ldots, 2^j\}$, 
        \dg{the} one-dimensional association to terminals \dg{is} 
        \begin{align} \label{eq:fmn}
            f^{(j)}(k) = 
            \begin{cases}
                (k + 2^j + 1)/2, & \text{if } k \text{ is odd}, \\
                k/2, & \text{if } k \text{ is even}.
            \end{cases}
        \end{align}
        This ensures adjacent satellites associate with terminals separated by at least $2^{j-1}$ positions.
        \dg{For} $k > 2^j$, we apply periodic extension:
        \begin{align} \label{eq:fmn_extend}
            f^{(j)}(k) = 2^j \left\lfloor 2^{-j}(k-1) \right\rfloor + f^{(j)}\left( k - 2^j \left\lfloor 2^{-j}(k-1) \right\rfloor \right).
        \end{align}
        
        For a two-dimensional regular grid, 
        let $F_x(\cdot)$ and $F_y(\cdot)$ denote associations along $x$ and $y$ axes, applying one-dimensional shuffling independently. We can further iterate to enhance interference suppression. 
        Parameters include shuffling block sizes $D_x = 2^{n_x}$ and $D_y = 2^{n_y}$ for integers $n_x, n_y \ge 1$, and rounds $\ell_x$ and $\ell_y$ satisfying $\ell_x \le n_x - 1$ and $\ell_y \le n_y - 1$.
        \dg{For one round,} $F_x(m) = f^{(n_x)}(m)$. In general, 
        \begin{align}
            F_x(m) = f^{(n_x - \ell_x + 1)} \big( f^{(n_x - \ell_x + 2)} \big( \cdots f^{(n_x)}(m) \cdots \big) \big).
        \end{align}
        
        Figures~\ref{fig:shuffling_round_1} and~\ref{fig:shuffling_round_2} \dg{show} one and two rounds of shuffling in one dimension with $D = 8$. More rounds (e.g., from $\ell = 1$ to $\ell = 2$) increase minimum distance between serving satellites and interfering terminals but may increase satellite-terminal distance. In high-density scenarios, \dg{interference reduction may outweigh path loss increase, improving} spectral efficiency.
        
        To extend to two dimensions, recall the regular locations $\bs_{i,j}^{\text{reg}}$ and   $\bg_{i,j}^{\text{reg}}$ from~\eqref{eq:sijreg} and~\eqref{eq:gijreg}. Shuffling applies independently along axes. Let $F^{\text{shuffle}}(i, j)$ denote the shuffled indices of the terminal associated with $\bs_{i,j}^{\text{reg}}$:
        \begin{align}\label{eq:shuffle2D}
            F^{\text{shuffle}}(i, j) = \left( 2 F_x\left( (i - q)/{2} \right) + r, \, F_y(j) \right),
        \end{align}
        where $q \in \{0, 1\}$ is $i$ modulo 2, and $r \in \{0, 1\}$ is $F_y(j)$ modulo 2.

    \begin{figure}
        \begin{center}
            \subfigure[]{
            \label{fig:shuffling_round_1}
            \includegraphics[width = \columnwidth]{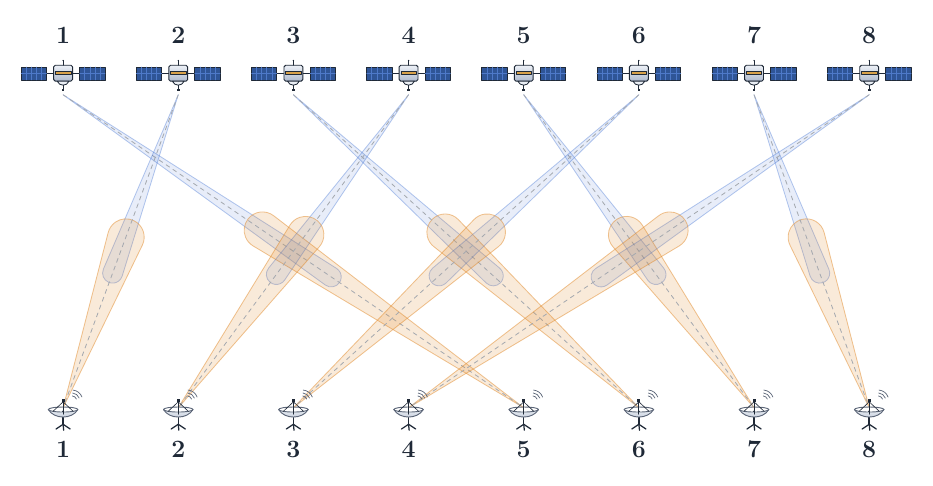}
            }
            \subfigure[]{
            \label{fig:shuffling_round_2}
            \includegraphics[width = \columnwidth]{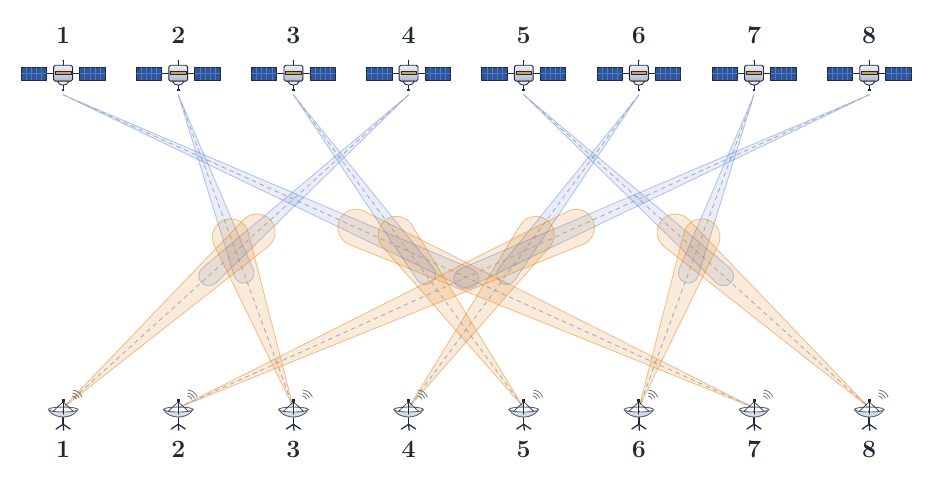}
            }
        \end{center}
        \caption{Shuffling strategies when $D = 8$ and $\ell=1$ in (a) and $\ell=2$ in (b).}
    \end{figure}

    Once shuffling parameters $(\ell_x, \ell_y, D_x, D_y)$ are specified, the association is computed in closed form \dg{via~}\eqref{eq:shuffle2D}. \dg{Thus,} computational complexity arises solely from searching possible  parameter combinations \dg{for the optimal outcome}.

\section{Numerical Results}\label{sec:NumRes}
    \subsection{Antenna Patterns and System Parameters}\label{subsec:AntennaPattern}
        \co{Very-small aperture terminals (VSATs) are assumed for the terminals. Following the proposal by THALES in \cite{3gpp_TSG_RAN_WG4_meeting}, the same antenna beam pattern is used for both terminals and satellites.  In particular, given off-axis angle $\theta$ and boresight-to-first-null angle $B$, the antenna pattern is expressed as~\cite{3gpp_beam_pattern, 3gpp_beam_pattern_2, kim2024downlink} 
        \begin{align}\label{eq:beam-andrews}
        	g(B, \theta) =  \begin{cases}
        		1, & \text{if } \theta = 0^\circ \\
        		4 \abs{\frac{J_1 (K\sin\theta)}{K\sin\theta}}^2, & \text{for } 0^\circ < \abs{\theta}\leq 90^\circ,
        	\end{cases}
        \end{align}
        where  $J_1(\cdot)$ is the first-order Bessel function of the first kind and $K$ determines the beamwidth.  }

        That is, $B = \sin^{-1}\left(3.8317 /K\right)$; equivalently, the conventional (null-to-null) first-null beamwidth equals $2B$, and the half-power beamwidth is $2\sin^{-1}\left(1.6163/K\right)$, all determined solely by $K$. Given satellite and terminal beamwidths $B_{\rm{sat}}$ and $B_{\rm{gs}}$, the associated antenna patterns are
        \begin{align}
        	w_s(\theta) \triangleq g(B_{\rm{sat}},\theta), \,  w_g(\theta) \triangleq g(B_{\rm{gs}},\theta).\label{eq:beam_pattern_final}
        \end{align}
        For the examples in this section we assume that $h = 550$ km, the path loss exponent $\alpha = 2.5$ and all path attenuations are frequency-flat. Throughout this section, spectral efficiency is reported in bits/s/Hz per $1000~\mathrm{km}^2$, i.e., $10^3$ times the per-km$^2$ value, for readability.

    \subsection{Single Channel Network: Random versus Regular Configuration}\label{subsec:SimRes}
        \co{The primary objective of this section is to evaluate the performance of the regular configuration for the single-channel network when distance-based association is employed. Specifically, the regular configuration is compared against random configurations, in which satellites and terminals are independently and uniformly distributed over two parallel planar surfaces, as described in Sec.~\ref{subsec:Flatplaneapprox}. Associations between satellites and terminals are established by minimizing the squared-distance metric in \eqref{eq:assoccost}. Additionally, the look directions of each associated satellite–terminal pair are assumed to be aligned.}
            
        \begin{figure}
            \centering
            \includegraphics[width = \columnwidth]{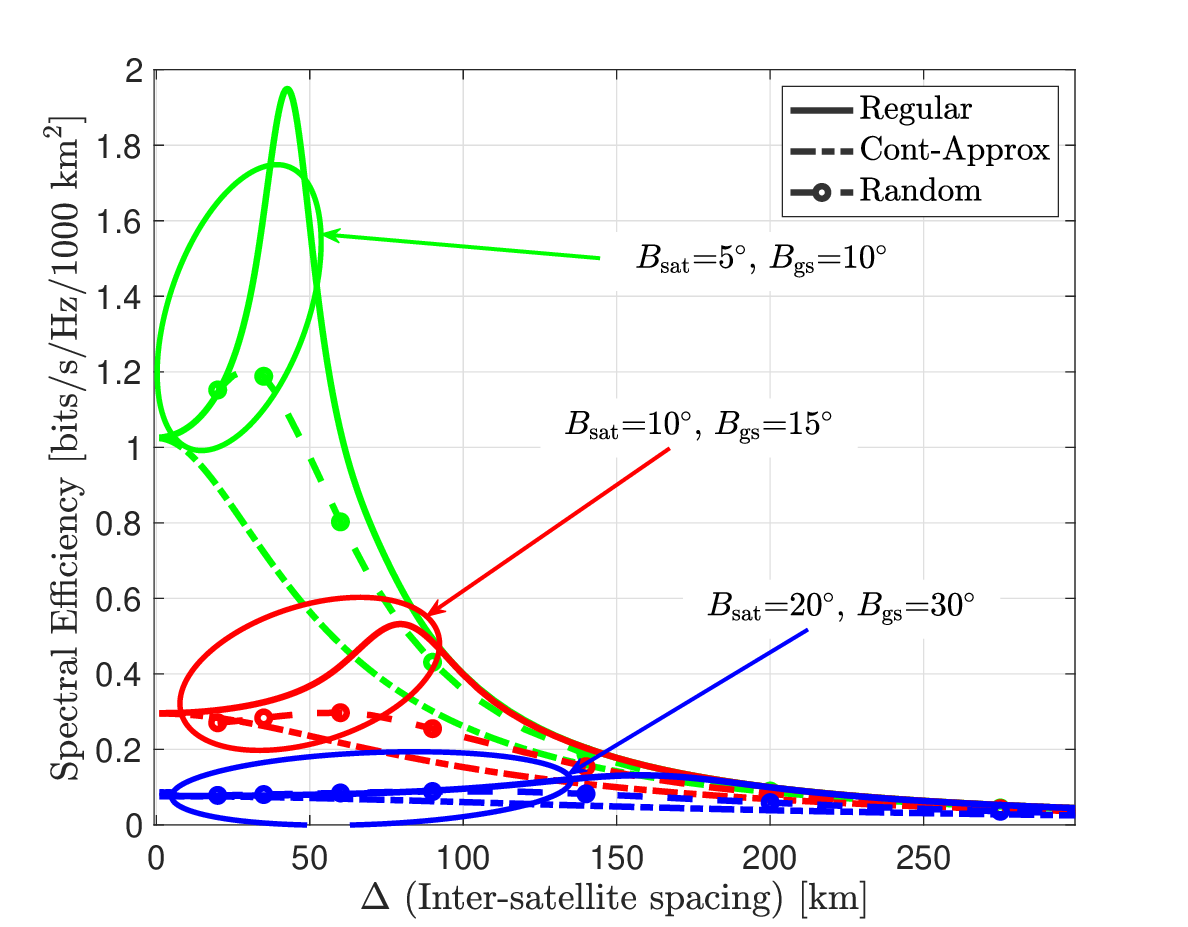}
            \caption{Spectral efficiency versus inter-satellite spacing with randomly and regularly placed satellites and terminals along with the continuous approximation for the single-channel network. Results are shown for three sets of beamwidths  $(B_{\rm{sat}},B_{\rm{gs}})\in\{(5^\circ, 10^\circ), (10^\circ, 15^\circ), (20^\circ, 30^\circ)\}$ and $\psdmax h^{-\alpha}/\sigma^2 = 10~$dB.}
        \label{fig:single_channel_random_reg_various_Bsat_Bgs}
        \end{figure}
        \begin{figure}
            \centering
            \includegraphics[width = \columnwidth]{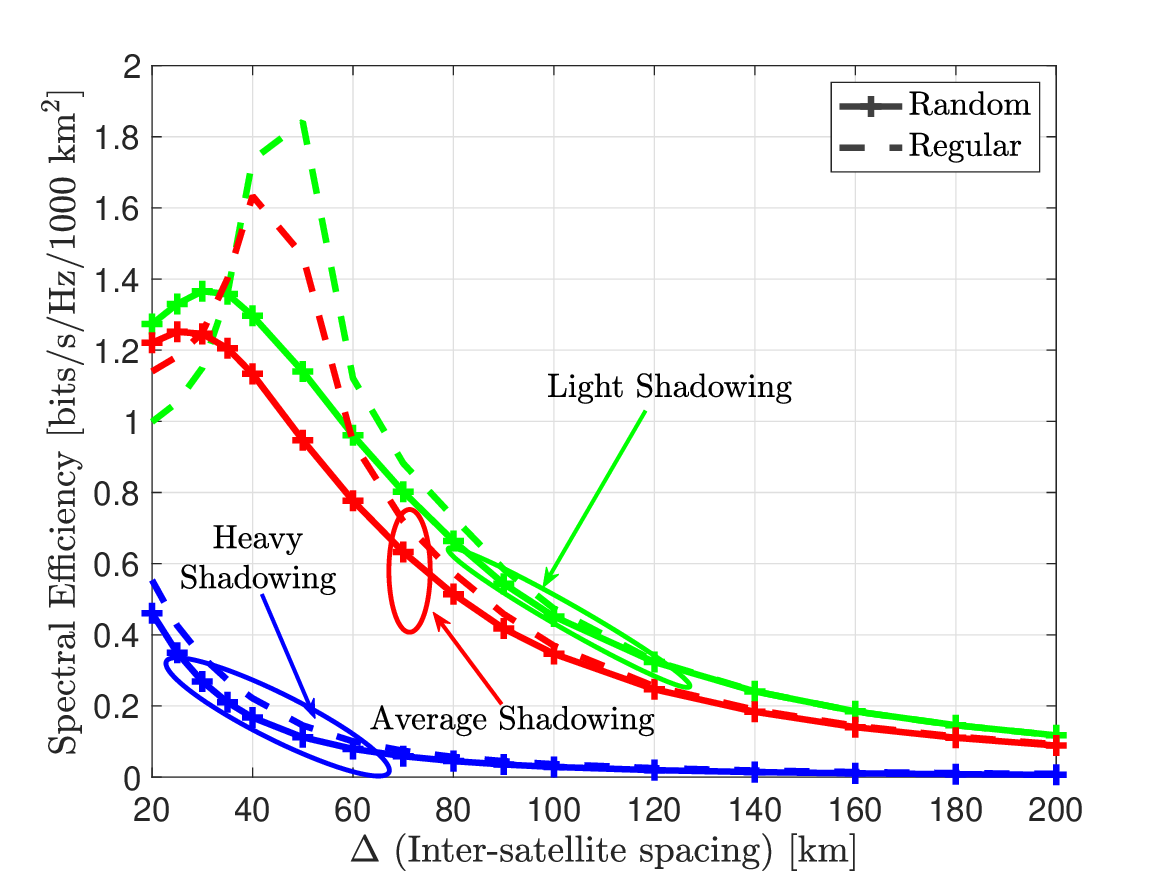}
            \caption{Spectral efficiency versus inter-satellite spacing for random and regular configurations under light, average and heavy shadowing scenarios when $(B_{\rm{sat}},B_{\rm{gs}}) = (5^\circ, 10^\circ)$,  $\psdmax h^{-\alpha}/\sigma^2 = 10~$dB.}
            \label{fig:spec_eff_single_chan_SR}
        \end{figure}

        Fig.~\ref{fig:single_channel_random_reg_various_Bsat_Bgs} compares spectral efficiencies versus inter-satellite spacing corresponding to randomly and regularly-placed satellite networks \co{along with the continuous approximation \textcolor{black}{\eqref{eq:Rcont}}} for the single channel network. Results are shown for three sets of beamwidths $(B_{\rm{sat}},B_{\rm{gs}})\in\{(5^\circ, 10^\circ), (10^\circ, 15^\circ), (20^\circ, 30^\circ)\}$. For the random configuration,  $\Delta$ denotes the expected inter-satellite spacing. The figure indicates that the spectral efficiency achieved by the regular configuration upper bounds that of the random configuration with the squared-distance association between satellites and terminals.
        In addition, for very small (or very large) values of $\Delta$, regular and random have similar spectral efficiencies. This is because as the density goes to zero, so does the interference at each terminal. Since the altitude of the satellites dominates the path-loss term for the direct-link, the regular and the random configurations behave similarly.  In addition, in the high-density regime the SINRs for both regular and random configurations approach the continuous approximation.

        
       Fig.~\ref{fig:spec_eff_single_chan_SR} compares the spectral efficiencies of regular and random configurations with Shadowed Rician (SR) fading. In this setup, the gain of the link from satellite $i$ to terminal $k$ is multiplied by a random attenuation factor $\xi_{i,k}$, which represents the satellite-to-terminal fading effect. The values of $\xi_{i,k}$ are independent and identically distributed according to the Shadowed Rician distribution. In this model, the average power of the non-line-of-sight component is denoted by $2b$, the average power of the line-of-sight component is given by $\omega$, and the parameter $m$ characterizes the fading order~\cite{Abdi2003SSR, kim2024downlink}. The specific values of $b$, $m$, and $\omega$ corresponding to light, average, and heavy shadowing conditions can be found in~\cite{kim2024downlink}. The simulation is performed for a single channel network, where satellite to terminal associations are determined using the distance based rule. The results show that with this fading model the regular configuration continues to provide an upper bound on the spectral efficiency achieved by the random configuration.

    \subsection{Multi-Channel Network}\label{sec:multivsreg}
        \co{In this section, a general satellite network described in Sec.~\ref{sec:SystemModel} is considered. Each satellite serves $N_B$ different terminals, where $N_B\in\{10, 20\}$. In other words, the number of terminals is equal to $N_B$ times the number of the satellites. To generate the satellite and terminal locations, points are first uniformly distributed over two parallel planar surfaces. These positions are then projected onto the surfaces of spheres with radii $r_e$ and $r_e + h$, corresponding to the terminals and satellites, respectively. Associations between satellites and terminals are again based on minimizing the squared-distance metric \eqref{eq:assoccost}.

        For subband allocation, a hexagonal grid is superimposed on the Voronoi regions formed by projecting the terminal positions onto the planar surface $\{[x, y, r_e] \mid x, y \in\mathbb{R}\}$, as illustrated in Fig.~\ref{fig:freq-reuse}.  This figure presents an example of a hexagonal frequency reuse pattern overlaid on the Voronoi regions of the terminals for $M = 4$, where hexagon centers are labeled with the digits 1 through 4. For example, the Voronoi region that includes the origin contains centers of two hexagons  and should be allocated subbands $\{1, 3\}$; the adjacent Voronoi region to the upper left is allocated subband $\{1\}$.  In the proposed model, a terminal with a Voronoi region that does not contain a hexagonal center is not served by any satellite. The primary reason for employing a hexagonal frequency reuse pattern on the ground plane is that a sphere cannot be perfectly tessellated using only hexagons. However, on a flat surface, hexagons provide the most efficient way to divide the area into regions of equal size with the smallest total perimeter~\cite{hales2002honeycombconjecture}.

        \begin{figure}
            \centering
            \includegraphics[width = \columnwidth]{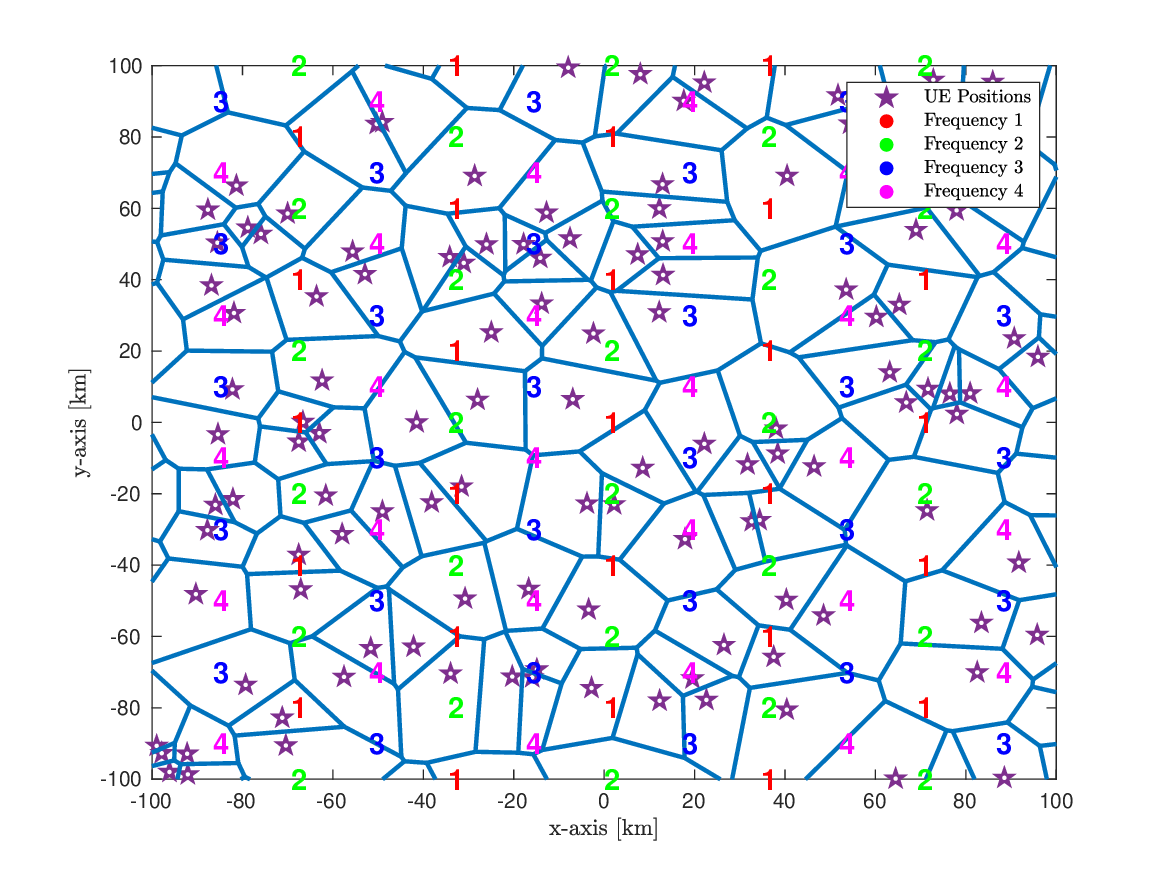}
            \caption{Voronoi regions of \co{projections of the randomly placed terminals to a plane} along with the centers of the hexagonal frequency cells with $M = 4$ subbands.}
            \label{fig:freq-reuse}
        \end{figure}

       Fig.~\ref{fig:spectralefficiency_random_simulator} presents the spectral efficiency as a function of inter-satellite spacing for multi-channel satellite networks. The transmission power is chosen such that $P_{\text{max}} h^{-\alpha} / (\aB \sigma^2)=8$ dB, and the power spectral density is set to $\psdmax = 10 P_{\text{max}} / \aB$. The beamwidths of the satellite and the terminal antennas are given by $(B_{\mathrm{sat}}, B_{\mathrm{gs}}) = (10^\circ, 20^\circ)$. Additionally, the maximum spectral efficiency attained across all values of $\Delta$ under the regular configuration is indicated as a horizontal line. 
        The plots include results for $M = 1$, 4, and 12 subbands. For each value of $\Delta$, the optimal co-channel distance within the hexagonal frequency reuse pattern is determined through numerical optimization. Increasing the number of subbands leads to a significant improvement in spectral efficiency when $\Delta$ is small, which corresponds to a dense network deployment. In particular, as the number of subbands increases, the performance approaches the maximum spectral efficiency of the regular configuration in the dense regime where $\Delta$ is small. However, this gain becomes negligible once $\Delta$ exceeds approximately 330 km.}
        


        \begin{figure}
            \centering
            \includegraphics[width = \columnwidth]{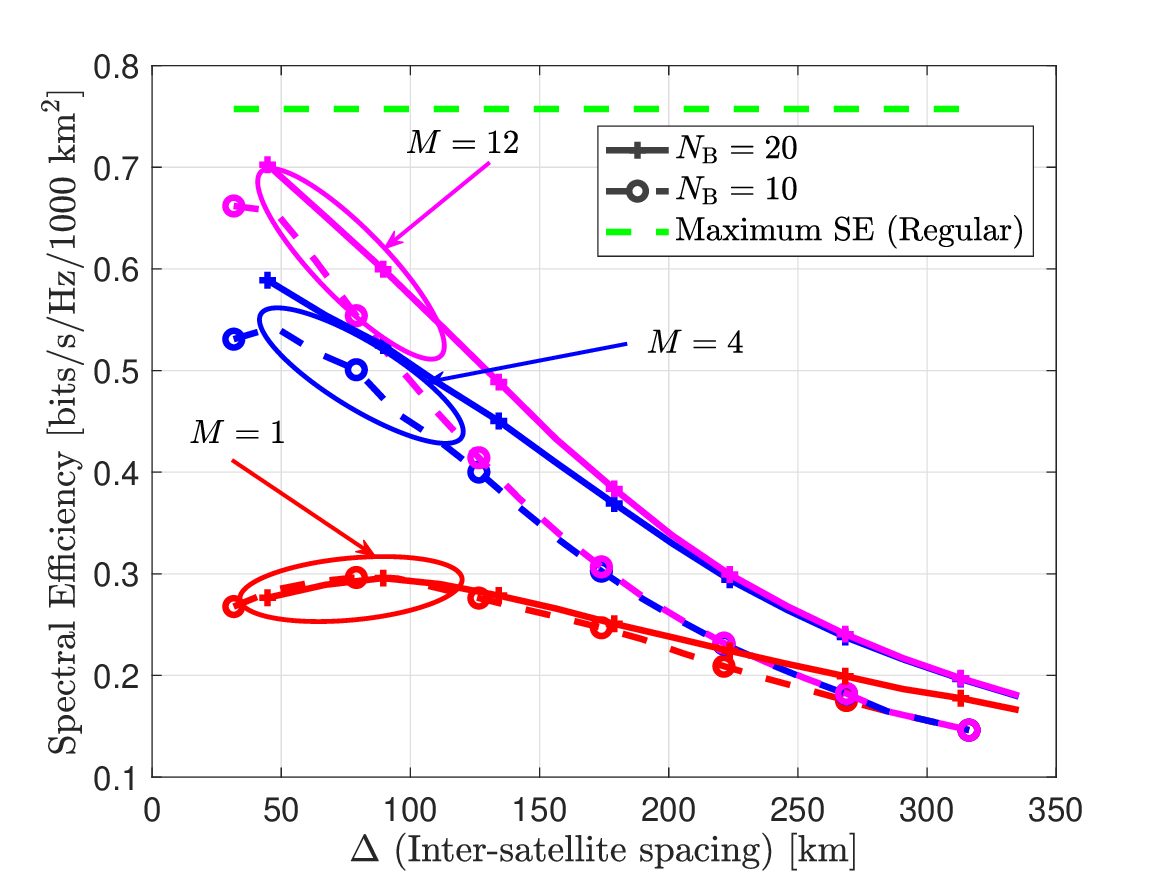}
            \caption{Spectral efficiency versus inter-satellite spacing with random placements and multiple channels. Also shown is the upper bound for the regular configuration. Parameters are $P_{\text{max}} h^{-\alpha} / (\aB \sigma^2) = 8$ dB,  $\psdmax = 10 P_{\text{max}}/\aB$, $(B_{\rm{sat}},B_{\rm{gs}}) = (10^\circ, 20^\circ)$,  number of subbands $M = 1, 4, 12$ 
            and each satellite serves $N_B = $ 10 or $N_B = $ 20 terminals.}
            \label{fig:spectralefficiency_random_simulator}
        \end{figure}


    \subsection{Shuffling Strategy for the Single Channel Network}
    
        \begin{figure}
            \centering
            \includegraphics[width = \columnwidth]{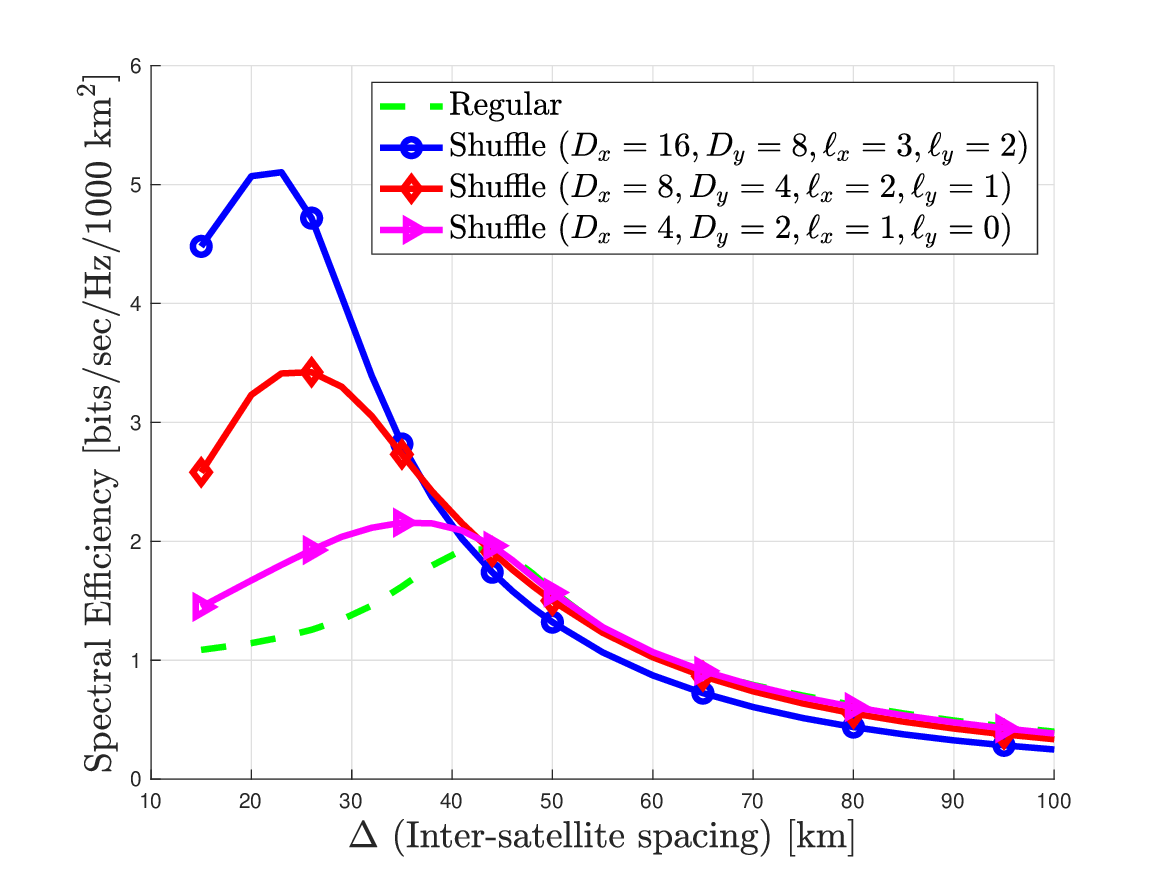}
            \caption{Spectral efficiency versus inter-satellite spacing for both minimum-distance and shuffled association strategies in a regularly placed network. Parameters: $(B_{\mathrm{sat}}, B_{\mathrm{gs}}) = (5^\circ, 10^\circ)$, $\psdmax h^{-\alpha} / \sigma^2 = 10$~dB, and $(D_x, D_y, \ell_x, \ell_y) \in \{(16, 8, 3, 2), (8, 4, 2, 1), (4, 2, 1, 0)\}$.}
            \label{fig:shufflevsreg}
        \end{figure}

    \co{
       The spectral efficiencies achieved by the shuffling strategy and the distance based association method under the regular configuration are shown in Fig.~\ref{fig:shufflevsreg}. Once the association is determined, the positions of the satellites and terminals are projected onto the surfaces of spheres with radii $r_e$ and $r_e + h$, respectively. The direct link gain is set to $\psdmax h^{-\alpha} / \sigma^2 = 10$~dB. The beamwidths of the satellite and the terminal antennas are selected as $(B_{\mathrm{sat}}, B_{\mathrm{gs}}) = (5^\circ, 10^\circ)$. 

        For the shuffling strategy, three sets of parameters are considered: $(D_x, D_y, \ell_x, \ell_y) \in \{(16, 8, 3, 2), (8, 4, 2, 1), (4, 2, 1, 0)\}$. Fig.~\ref{fig:shufflevsreg} illustrates that in the high-density regime, increasing the shuffling block sizes $(D_x, D_y)$ along with the number of shuffling rounds $(\ell_x, \ell_y)$ leads to a noticeable improvement in spectral efficiency. However, as the inter-satellite spacing $\Delta$ becomes larger, this performance gain gradually decreases. In fact, when $\Delta$ is sufficiently large, applying the shuffling strategy results in a reduction in spectral efficiency compared to the baseline configuration.

       }

\section{Conclusions}\label{sec:Conc}
    \co{In general, determining the maximum spectral efficiency of a LEO satellite network while jointly optimizing satellite and terminal locations, association strategies, and frequency allocations is a highly complex task. To address this, we introduced the regular configuration of satellites and terminals as a tractable benchmark for estimating meaningful upper bounds on the sum spectral efficiency. This was achieved through a reduction to a virtual single-channel network model. Our numerical evaluations indicate that the spectral efficiency attained under the regular configuration provides a meaningful upper bound for randomly generated multi-channel satellite networks, particularly when distance-based association rules are employed.  It is also worth noting that the single-channel network reduction is still valid if we consider multi-altitude satellite networks and/or introduce fading into our model. For the multi-altitude network scenario, we can approximate our geometry with multiple parallel planes. 
    
    Furthermore, our results reveal that substantial gains in spectral efficiency can be achieved by modifying satellite-to-terminal associations. While we proposed a  shuffling strategy to improve associations, the determination of an optimal association policy remains an open problem for future research. Additionally, this work assumes full coordination among satellites, a condition that may not hold in practice, especially in competitive environments with multiple service providers. Extending the analysis to such scenarios, including distributed or contention-based resource allocation without global coordination, represents a promising direction for future work.}

\appendices

\section{Proof of Proposition 2}\label{sec:proofprop2}
\begin{proof}
Let $\mathbf P=(P_1,\dots,P_N)\in[0,\psdmax]^N$ collect the transmit powers and define the network sum spectral efficiency
\begin{align}\label{eq:Phidef}
\Phi(\mathbf P)\triangleq\sum_{k=1}^{N}\log_2\!\left(1+\mathrm{SINR}_k\right),
\end{align}
with $\mathrm{SINR}_k$ as in~\eqref{eq:SINR_k}. For each terminal $k$, denote the interference-plus-noise and the total received power by
\begin{align}
\nu_k&\triangleq\sum_{i\in\mathcal I_k}P_i\,\Omega_{i,k}+\sigma^2,\label{eq:nuk}\\
\tau_k&\triangleq \nu_k+P_k\,\Omega_{k,k},\label{eq:tauk}
\end{align}
respectively, so that $1+\mathrm{SINR}_k=\tau_k/\nu_k$. Substituting this into~\eqref{eq:Phidef},
\begin{align}\label{eq:Philog}
\Phi(\mathbf P)=\frac{1}{\log 2}\sum_{k=1}^{N}\bigl(\log \tau_k-\log \nu_k\bigr).
\end{align}
Because $k\notin\mathcal I_k$, the power $P_k$ does not appear in $\nu_k$; thus $\nu_k$ varies with $P_j$ only through the interferers $j\in\mathcal I_k$, whereas $\tau_k$ varies in addition through $P_k$:
\begin{align}
\frac{\partial \nu_k}{\partial P_j}&=
\begin{cases}\Omega_{j,k}, & j\in\mathcal I_k,\\[2pt] 0,&\text{otherwise},\end{cases}\label{eq:dnuk}\\[4pt]
\frac{\partial \tau_k}{\partial P_j}&=
\begin{cases}\Omega_{k,k}, & j=k,\\[2pt] \Omega_{j,k}, & j\in\mathcal I_k,\\[2pt] 0,&\text{otherwise}.\end{cases}\label{eq:dtauk}
\end{align}
Differentiating~\eqref{eq:Philog} and using~\eqref{eq:dnuk}--\eqref{eq:dtauk},
\begin{align}
\frac{\partial \Phi}{\partial P_j}
&=\frac{1}{\log 2}\sum_{k=1}^{N}\left(\frac{1}{\tau_k}\frac{\partial \tau_k}{\partial P_j}-\frac{1}{\nu_k}\frac{\partial \nu_k}{\partial P_j}\right)\nonumber\\
&=\frac{1}{\log 2}\left[\frac{\Omega_{j,j}}{\tau_j}+\!\!\sum_{k:\,j\in\mathcal I_k}\!\!\Omega_{j,k}\left(\frac{1}{\tau_k}-\frac{1}{\nu_k}\right)\right],
\end{align}
where the first term is the $k=j$ contribution and the sum collects the terminals $k$ for which satellite $j$ is an interferer. Since $\tau_k-\nu_k=P_k\Omega_{k,k}$ by~\eqref{eq:tauk}, we have $1/\tau_k-1/\nu_k=-P_k\Omega_{k,k}/(\nu_k\tau_k)$, and therefore
\begin{align}\label{eq:grad_general}
\frac{\partial \Phi}{\partial P_j}=\frac{1}{\log 2}\Bigg[\frac{\Omega_{j,j}}{\tau_j}-\!\!\sum_{k:\,j\in\mathcal I_k}\!\!\frac{P_k\,\Omega_{k,k}\,\Omega_{j,k}}{\nu_k\,\tau_k}\Bigg].
\end{align}
We now evaluate~\eqref{eq:grad_general} at the full-power point $\mathbf P=\psdmax\mathbf 1_N$, where $\mathbf1_N$ is  an $N$-dimensional all ones vector. Condition~1 gives $\Omega_{k,k}=\Omega_{1,1}$ for all $k$, while the incoming-interference part of condition~2, $\sum_{i\in\mathcal I_k}\Omega_{i,k}=\bar I$, makes the denominators in~\eqref{eq:nuk}--\eqref{eq:tauk} identical across terminals:
\begin{align}\label{eq:nutaudef}
\nu_k=\nu\triangleq\psdmax\bar I+\sigma^2,\qquad
\tau_k=\tau\triangleq \nu+\psdmax\,\Omega_{1,1}.
\end{align}
The outgoing-interference part of condition~2 gives $\sum_{k:\,j\in\mathcal I_k}\Omega_{j,k}=\bar I$ for every satellite $j$. 
Substituting~\eqref{eq:nutaudef} and these identities into~\eqref{eq:grad_general},
\begin{align}
\frac{\partial \Phi}{\partial P_j}\bigg|_{\psdmax\mathbf 1}
&=\frac{1}{\log 2}\left[\frac{\Omega_{1,1}}{\tau}-\frac{\psdmax\,\Omega_{1,1}}{\nu\,\tau}\!\!\sum_{k:\,j\in\mathcal I_k}\!\!\Omega_{j,k}\right]\nonumber\\
&=\frac{\Omega_{1,1}}{\tau \log 2}\left(1-\frac{\psdmax\bar I}{\nu}\right)
=\frac{\Omega_{1,1}}{\tau \log 2}\cdot\frac{\sigma^2}{\nu}>0,\label{eq:grad_pos}
\end{align}
where the last step uses $\nu-\psdmax\bar I=\sigma^2$ from~\eqref{eq:nutaudef}.
Finally, $\psdmax\mathbf 1_N$ is the corner of the feasible box $[0,\psdmax]^N$ at which all constraints are active, so every feasible direction $\mathbf u \triangleq [u_1 \ldots u_N]^\intercal$ satisfies $u_j\le 0$ for all $1\le j\le N$. By~\eqref{eq:grad_pos} each partial derivative at $\psdmax\mathbf 1_N$ equals a common constant $c>0$, hence
\begin{align}
\nabla\Phi(\psdmax\mathbf 1_N)^{\!\top}\mathbf u
= c\sum_{j=1}^{N}u_j
= -\,c\sum_{j=1}^{N}\lvert u_j\rvert
\le -\,c\,\lVert\mathbf u\rVert .
\end{align}
Since $\Phi$ is continuously differentiable,
\begin{align}
    \Phi(\psdmax\mathbf 1_N+\mathbf u)-\Phi(\psdmax\mathbf 1_N)&=\nabla\Phi(\psdmax\mathbf 1_N)^{\!\top}\mathbf u+o(\lVert\mathbf u\rVert)\\
    &\le -c\lVert\mathbf u\rVert+o(\lVert\mathbf u\rVert),
\end{align}
which is strictly negative for every feasible $\mathbf u\neq\mathbf 0$ with $\lVert\mathbf u\rVert$ sufficiently small. Hence $\psdmax\mathbf 1_N$ is a strict local maximizer of $\Phi$ over $[0,\psdmax]^N$.
\end{proof}
    \section{Proof of Proposition 3}\label{sec:proofprop3}
     Recall from \eqref{eq:reg_angle} the slant range
    $D^{\mathrm{reg}}_{i,j}(\Delta)=\lVert s^{\mathrm{reg}}_{i,j}-g^{\mathrm{reg}}_{0,0}\rVert$
    between the satellite at $s^{\mathrm{reg}}_{i,j}$ and the reference terminal at
    $g^{\mathrm{reg}}_{0,0}=[0,0,r_e]$. It should be noted that
    \begin{align}
    	\lim_{\Delta\to 0} D_{2i, 2j}^{\text{reg}}(\Delta) = \lim_{\Delta\to 0} D_{2i+1, 2j+1}^{\text{reg}}(\Delta) = h,
    \end{align}
    and
    \begin{align}
    	\lim_{\Delta\to 0} \theta_{2i, 2j}^{\text{reg}}(\Delta) = \lim_{\Delta\to 0} \theta_{2i+1, 2j+1}^{\text{reg}}(\Delta) = 0.
    \end{align}
    Thus, it is evident that
    \begin{align}
        \lim_{\Delta\to 0} \eta(\psdmax, \Delta) = \infty.
    \end{align}
    Then, by  L'Hôpital's rule,
    \begin{align}
    	&\lim_{\Delta\to 0}  R^{\text{reg}}(\psdmax, \Delta) = \lim_{\Delta\to 0} \frac {\log_2\left(1 + \frac{\gamma(\psdmax) } { \eta(\psdmax, \Delta) + 1}\right)}{\Delta^2 \sqrt{3}/2}  \nonumber\\
    	&=\lim_{\Delta\to 0} \Bigg( \frac{-\gamma(\psdmax)   \eta'(\psdmax, \Delta)}{\Delta\sqrt{3} \log{2}  \left(\eta(\psdmax, \Delta) + 1\right)^2}  \frac{1}{1 + \frac{\gamma (\psdmax) } { \eta(\psdmax, \Delta) + 1} } \Bigg),
    \end{align}
    where $\eta'(\psdmax, \Delta)$ denotes the first derivative of $\eta(\psdmax, \Delta)$ with respect to $\Delta$. Since we have
    \begin{align}
    	\lim_{\Delta\to 0} \frac{\eta(\psdmax, \Delta) + 1}{\gamma(\psdmax)  + \eta(\psdmax, \Delta) + 1 } = 1,
    \end{align}
    to show boundedness of $\lim_{\Delta\to 0}  R^{\text{reg}}(\psdmax,\Delta)$, it is sufficient to prove the following limit
    \begin{align}
    	\lim_{\Delta\to 0} \frac{\eta'(\psdmax, \Delta)}{\Delta \left(\eta(\psdmax, \Delta) + 1\right)^2}
    \end{align}
    is bounded. By  L'Hôpital's rule, one can see that
    \begin{align}
    	\lim_{\Delta\to 0} \frac{\eta'(\psdmax, \Delta)}{\Delta \left(\eta(\psdmax, \Delta) + 1\right)^2}  = \lim_{\Delta\to 0} \frac{-2}{\left(\eta(\psdmax, \Delta) + 1\right)\Delta^2}
    \end{align}
    Thus,  if we can prove that
    \begin{align}
    	\lim_{\Delta\to 0} \eta(\psdmax, \Delta) \Delta^2 > 0,
    \end{align}
    we are done.
    
    For any given $\epsilon >0$, let $\widetilde{D}>0$ be such that the aggregate interference at $(0,0)$ due to the satellites having distance larger than $\widetilde{D}$ is less than $\epsilon$. There is always such $\widetilde{D}$, because we know $\sum_{n} n^{-\alpha}$ is a convergent series for $\alpha > 1$. Clearly, the value of $\widetilde{D}$ depends on the beamwidths of the satellites and terminals. Accordingly, let $D \triangleq  \sqrt{\widetilde{D}^2-h^2}$. 
    
    Let us define  $I_0$ and  $I_1$ as follows
    \begin{align}
    	I_0 &\triangleq  \frac{\psdmax}{\sigma^2} \sum_{(i,j) \neq (0,0)} \left(D_{2i, 2j}^{\text{reg}}(\Delta)\right)^{-\alpha} W\left(\theta^{\text{reg}}_{2i,2j}(\Delta)\right), \\
     I_1&\triangleq  \frac{\psdmax}{\sigma^2} \sum_{(i,j)} \left(D_{2i+1, 2j+1}^{\text{reg}}(\Delta)\right)^{-\alpha} W\left(\theta^{\text{reg}}_{2i+1,2j+1}(\Delta)\right),
    \end{align}
    where $W(\theta)\triangleq w_s(\theta)w_g(\theta)$. It is clear that $\eta(\psdmax, \Delta)  = I_0+ I_1$. If $(2i, 2j)^\mathrm{th}$ link is leading a non-negligible interference at the terminal located $(0,0)$, then we must have
    \begin{align}
    	i^2 + 3j^2 \leq \frac{D^2}{\Delta^2}\, \text{and}\, (i, j) \neq (0, 0).
    \end{align}
    
     The number of such links, $T_0$, can be easily lower bounded as
    \begin{align}
    	T_0 &= \sum_{i=0}^{T} \left(2\floor[\Bigg]{\sqrt{\frac{T^2-i^2}{3}}} -1\right) \\
        &\ge  \frac{2}{\sqrt{3}}\sum_{i=0}^{T} \sqrt{T^2-i^2} - (2T + 1)\nonumber \\
    	&\geq \frac{2}{\sqrt{3}}\int_{0}^{T} \sqrt{T^2-x^2} \, dx - (2T + 1)\\
     &= \frac{2\pi}{4\sqrt{3}}T^2 -(2T + 1),
    \end{align}
    where $T = D/\Delta$. Via similar steps, the number of links, $T_1$, causing a non-negligible interference term in the summation $I_1$, can be lower bounded by the same number.
    
    Let us denote $\widetilde{I}$ as the interference term obtained when the distance between the satellite and the terminal is equal to $\widetilde{D}$. Thus, we can write that
    \begin{align}
    	\eta(\psdmax, \Delta) \Delta^2 &> \frac{\pi T^2 \Delta^2}{\sqrt{3}}  \widetilde{I} - (4T + 2) \widetilde{I}\Delta^2 \\
     &= \frac{\pi D^2} {\sqrt{3}}\widetilde{I} - \widetilde{I} (4 D\Delta + 2\Delta^2).
    \end{align}
    In other words, we show that 
    \begin{align}
    	\lim_{\Delta\to 0} \eta(\psdmax, \Delta) \Delta^2  > \frac{\pi D^2} {\sqrt{3}}\widetilde{I} > 0.
    \end{align}
    Hence, we are done. \hfill $\square$

\section{Proof of Proposition 4}\label{sec:proofprop4}
    \begin{lemma}\label{lemma:1}
        Assuming $w_s(\theta) w_g(\theta)$ is a non-increasing function of $\theta\in[0, \pi/2]$, $ (\hat{x}, \hat{y}) = \argmax_{x, y} \mathrm{SINR}(x,y)$ implies that $ -\Delta/2 \leq \hat{x} \leq \Delta/2$ and $ -\Delta\sqrt{3}/2 \leq \hat{y} \leq \Delta\sqrt{3}/2.$
    \end{lemma}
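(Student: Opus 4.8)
The plan is to exploit the translational (lattice) symmetry of the satellite configuration, together with the fact that for a fixed total received power the SINR is governed only by the serving-link strength. First I would observe that the \emph{total} normalized received power, $\mathrm{SNR}(x,y)+\mathrm{INR}(x,y) = \frac{\psdmax}{\sigma^2}\sum_{i\equiv j\,(\mathrm{mod}\,2)} D_{i,j}^{-\alpha}(x,y)\,W_{i,j}(x,y)$ where the sum now runs over \emph{all} lattice indices including $(0,0)$, is invariant under shifting $(x,y)$ by either $(\Delta,0)$ or $(0,\Delta\sqrt3)$: each of these vectors belongs to the hexagonal satellite lattice $\{(i\tfrac\Delta2,j\tfrac{\Delta\sqrt3}2): i\equiv j\,(\mathrm{mod}\,2)\}$, so such a shift merely permutes the terms of the sum. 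Call this periodic quantity $s(x,y)$.

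Next I would rewrite $\mathrm{SINR}(x,y)=\mathrm{SNR}/(\mathrm{INR}+1)=\mathrm{SNR}\big/\big(s-\mathrm{SNR}+1\big)$ and note that, for a fixed value of $s\ge 0$, this is a strictly increasing function of $\mathrm{SNR}$, since its derivative with respect to $\mathrm{SNR}$ equals $(s+1)/(\mathrm{INR}+1)^2>0$. Hence, along any level set of $s$, the SINR is controlled entirely by $\mathrm{SNR}(x,y)=\frac{\psdmax}{\sigma^2}D_{0,0}^{-\alpha}(x,y)\,W_{0,0}(x,y)$. Because $D_{0,0}(x,y)=\sqrt{x^2+y^2+h^2}$, the map $D\mapsto D^{-\alpha}$ is strictly decreasing, and $W_{0,0}$ is non-increasing in $D_{0,0}$ under the monotone combined-gain hypothesis, $\mathrm{SNR}(x,y)$ is a strictly decreasing function of the horizontal distance $\sqrt{x^2+y^2}$ from the terminal to the serving satellite (the degenerate case $W_{0,0}(x,y)=0$ makes $\mathrm{SNR}=0$ and is plainly not optimal).

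Then I would argue by contradiction. Suppose $(\hat x,\hat y)$ maximizes $\mathrm{SINR}$ but $|\hat x|>\Delta/2$. Shift by the lattice vector $-\mathrm{sgn}(\hat x)\,(\Delta,0)$ to the point $(\hat x-\mathrm{sgn}(\hat x)\Delta,\;\hat y)$. Since $|\hat x|>\Delta/2$ one checks directly that $|\hat x-\mathrm{sgn}(\hat x)\Delta|<|\hat x|$, so the horizontal distance to the serving satellite strictly decreases and thus $\mathrm{SNR}$ strictly increases, while $s$ is unchanged by the periodicity established in the first step; hence $\mathrm{SINR}$ strictly increases, contradicting maximality. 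Therefore $|\hat x|\le\Delta/2$. The identical argument with the lattice vector $-\mathrm{sgn}(\hat y)\,(0,\Delta\sqrt3)$, which leaves the $x$-coordinate untouched, yields $|\hat y|\le\Delta\sqrt3/2$.

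The monotonicity computations are routine. The one point that needs care is the first step: verifying that $(\Delta,0)$ and $(0,\Delta\sqrt3)$ genuinely lie in the full hexagonal satellite lattice (so the shift is a symmetry of the \emph{entire} interference pattern, not merely of a sublattice), and being precise that the shift produces a \emph{strict} increase in SINR so that optimality is actually violated. I expect this to be the only genuine obstacle.
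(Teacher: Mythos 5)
Your proof is correct and uses essentially the same mechanism as the paper: translate the terminal by a lattice vector of the hexagonal satellite grid, note the interference environment is preserved, and use monotonicity of $w_s(\theta)w_g(\theta)$ to show the serving link can only improve. The paper packages this as ``$\mathrm{SNR}$ increases and $\mathrm{INR}$ decreases'' term-by-term (its Eqs.~\eqref{eq:SNR_00_20}--\eqref{ineq:W_00_W_200}), while you equivalently observe that the total received power $s(x,y)$ is shift-invariant and $\mathrm{SINR}$ is increasing in $\mathrm{SNR}$ for fixed $s$ --- the same argument in slightly cleaner form.
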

    \begin{proof}
    Let $\mathrm{SNR}(x, y)$ and $\mathrm{INR}(x, y)$ be defined as follows
    \begin{align}
        \mathrm{SNR}(x, y) &\triangleq \frac{\psdmax}{\sigma^2} D_{0,0}^{-\alpha}(x, y) W_{0,0}(x,y), \\
        \mathrm{INR}(x, y) &\triangleq \frac{\psdmax}{\sigma^2} \sum_{ (i, j) \in \mathcal{I}} D_{i,j}^{-\alpha}(x, y) W_{i,j}(x, y),
    \end{align}
        Take any $x, y\in\mathbb{R}$ with  $x \geq \Delta/2$.   Note that
    \begin{align}
        \frac{\mathrm{SNR}(x-\Delta,y)}{\mathrm{SNR}(x, y)} &= \frac{D_{0,0}^{-\alpha}(x-\Delta,y) W_{0,0}(x-\Delta, y)}{D_{0,0}^{-\alpha}(x,y) W_{0,0}(x, y)} \nonumber \\
        &= \frac{D_{2,0}^{-\alpha}(x,y) W_{2,0}(x, y)}{D_{0,0}^{-\alpha}(x,y) W_{0,0}(x, y)} \label{eq:SNR_00_20},
    \end{align}
    and
    \begin{align}
		&\mathrm{INR}(x,y)-\mathrm{INR}(x-\Delta,y)=   \nonumber\\
		&\frac{\psdmax}{\sigma^2} \left(D_{2,0}^{-\alpha}(x,y) W_{2,0} (x, y) - D_{0,0}^{-\alpha}(x,y) W_{0,0} (x, y) \right) \label{eq:INR_00_20}.
	\end{align}
    It is possible to write that, 
	\begin{align}
		D_{2,0}(x,y) & = \sqrt{(x-\Delta)^2 + y^2 + h^2} \nonumber\\
		&\leq  \sqrt{x^2 + y^2 + h^2} = D_{0,0}(x, y), \label{ineq:D_00_D_200}
	\end{align}
    and 
    \begin{align}
    \theta_{2, 0}(x,y) &= \cos^{-1}\left(h/ \sqrt{(x-\Delta)^2 + y^2 + h^2}\right)\nonumber\\
    &\leq \cos^{-1}\left(h/ \sqrt{x^2 + y^2 + h^2}\right) = \theta_{0,0}(x,y).
    \end{align}
    We can argue that
    \begin{align}
          \theta_{2,0}(x,y) \leq \theta_{0,0}(x,y) \implies W_{0,0}(x,y) \leq W_{2,0}(x,y), \label{ineq:W_00_W_200}
    \end{align}
   as $w_s(\cdot)w_g(\cdot)$ is non-increasing in $[0, \pi/2]$. 
   
   As a consequence of \eqref{eq:SNR_00_20}-\eqref{ineq:W_00_W_200}, we can conclude that $\mathrm{SINR}(x-\Delta,y)\geq \mathrm{SINR}(x, y)$.  Similarly, one can show that for any $x\leq -\Delta/2$, $\mathrm{SINR}(x +\Delta,y)\geq \mathrm{SINR}(x, y)$. In addition, we can prove that for any $y \geq \Delta\sqrt{3}/2$,  we have $\mathrm{SINR}(x,y-\Delta\sqrt{3}/2)\geq \mathrm{SINR}(x, y)$ and  for any $y \leq -\Delta\sqrt{3}/2$,  we have $\mathrm{SINR}(x,y+\Delta\sqrt{3}/2)\geq \mathrm{SINR}(x, y)$. Thus, we conclude that we can focus on the intervals  $-\Delta/2 \leq x \leq \Delta/2$ and $-\Delta\sqrt{3}/2 \leq y \leq \Delta\sqrt{3}/2$.
    \end{proof}

    \begin{lemma}\label{lemma:2}
         For any $i=j\Mod{2}$ and $(i,j)\neq(0,0)$, define
             \begin{align}
            \rho_{i, j}(x, y) \triangleq \left(\frac{D_{0,0}(x,y)}{D_{i,j}(x,y)}\right)^2,
        \end{align}
         Then, $\rho_{i, j}(x, y)$	is convex in both $x$ and $y$ for $-\Delta/2 \leq x \leq \Delta/2$ and $-\Delta\sqrt{3}/2 \leq y \leq \Delta\sqrt{3}/2$.
        \end{lemma}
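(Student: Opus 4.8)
The plan is to prove convexity in $x$ for each fixed $y$; convexity in $y$ then follows verbatim after interchanging the roles of $a\triangleq i\Delta/2$ and the half-width $\Delta/2$ with $b\triangleq j\Delta\sqrt{3}/2$ and $\Delta\sqrt{3}/2$. Fixing $y$, write $\rho_{i,j}(x,y)=\dfrac{x^2+c_1}{(x-a)^2+c_2}$ with $c_1\triangleq y^2+h^2>0$, $c_2\triangleq(y-b)^2+h^2>0$, and $D(x)\triangleq(x-a)^2+c_2$. Two applications of the quotient rule and simplification give $\partial_x^2\rho_{i,j}(x,y)=2E(x)/D(x)^3$, where $E$ is the cubic $E(x)=D(x)\big(a^2+2ax-4x^2+c_2-c_1\big)+4(x-a)^2\big(x^2+c_1\big)$ (the quartic terms cancel). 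Since $D>0$, the claim reduces to showing $E(x)\geq0$ for $-\Delta/2\leq x\leq\Delta/2$.

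I would first dispose of the degenerate coordinate. If $i=0$ then $j$ is even and, since $(i,j)\neq(0,0)$, $|j|\geq2$; hence $|b|\geq\Delta\sqrt{3}$, which forces $(y-b)^2\geq3\Delta^2/4$ and $b(b-2y)\geq0$ on the strip $|y|\leq\Delta\sqrt{3}/2$. In this case $E(x)$ factors as $(c_2-c_1)(c_2-3x^2)$, with both factors nonnegative: $c_2-c_1=b(b-2y)\geq0$, while $c_2-3x^2\geq(y-b)^2+h^2-3\Delta^2/4\geq h^2>0$. The same computation, with $b$ and $\Delta\sqrt{3}/2$ replacing $a$ and $\Delta/2$, settles the analogous ``$j=0$'' degenerate case that appears when proving convexity in $y$.

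For $i\neq0$ I would reduce to $i\geq1$ via the symmetry $x\mapsto-x$, $a\mapsto-a$, which leaves $E$ unchanged. Then $a\geq\Delta/2\geq|x|$, so $x-a\leq0$ and $|x-a|\geq(i-1)\Delta/2$; when $|i|$ is large this makes the nonnegative term $4(x-a)^2(x^2+c_1)$ dominate and $E\geq0$ is easy. The substantive case is that of the nearest interfering satellites (small $|i|$, near where the rectangle comes closest to $\bs_{i,j}^{\text{reg}}$), where the factor $a^2+2ax-4x^2+c_2-c_1$ can be negative and one must use that $c_2=(y-b)^2+h^2$ dominates the $x$-dependent terms --- equivalently, that the altitude $h$ is large relative to $\Delta$. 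I expect this cubic estimate for the near-in satellites to be the main obstacle; establishing it cleanly will likely require either restricting $(x,y)$ to the Voronoi cell of the origin (keeping $D_{i,j}$ bounded away from its minimum $h$) or invoking a mild altitude condition such as $h\geq\sqrt{3}\,\Delta$, which holds comfortably at LEO altitudes for the inter-satellite spacings of interest.
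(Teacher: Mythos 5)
Your second-derivative reduction is correct: your $E(x)$ expands to exactly the polynomial the paper works with, and your treatment of the degenerate case $i=0$ (factoring $E=(c_2-c_1)(c_2-3x^2)$ and checking each factor on the strip) is precisely the paper's Case I. The gap is the case you yourself flag as ``the main obstacle'': for $i\neq 0$ you leave the sign of $E$ unresolved and conjecture that closing it needs either a smaller domain or an altitude hypothesis such as $h\ge\sqrt{3}\,\Delta$. For the case the paper actually proves in detail (both indices even, so the interferer's abscissa $A=i\Delta/2$ satisfies $\abs{A}\ge\Delta$), no such hypothesis is needed; the trick is simply a different grouping of the same polynomial:
\begin{align*}
E(x)=(x-A)^2\left(A^2+2Ax+3c_1\right)+c_2\left(c_2-c_1+2A^2-3x^2\right).
\end{align*}
On $\abs{x}\le\Delta/2$ one has $A^2+2Ax\ge\abs{A}(\abs{A}-\Delta)\ge0$ and $2A^2-3x^2\ge 2\Delta^2-3\Delta^2/4>0$, while $c_2\ge c_1$ holds by the same computation as in your degenerate case; both summands are therefore nonnegative for every $h$. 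Your grouping is what manufactures the difficulty: by attaching $-4x^2$ and all of $c_2-c_1$ to the factor multiplying $D(x)$, you produce a bracket that genuinely can be negative, and you are then forced into the delicate near-in estimate that you could not close.

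That said, your instinct that the nearest interferers are genuinely dangerous is correct --- just not where you looked. On the odd sublattice, which the paper dismisses with ``can be proven similarly,'' the bound $\abs{A}\ge\Delta$ degrades to $\abs{A}\ge\Delta/2$ and the claim actually fails: for $(i,j)=(1,1)$ the corner $(\Delta/2,\Delta\sqrt{3}/2)$ of the lemma's rectangle lies directly beneath the interfering satellite, and there your own formula gives $E(\Delta/2)=D(\Delta/2)\left(-A^2+c_2-c_1\right)=h^2\left(-\tfrac{\Delta^2}{4}-\tfrac{3\Delta^2}{4}\right)=-h^2\Delta^2<0$ for every $h$, so $\rho_{1,1}(\,\cdot\,,\Delta\sqrt{3}/2)$ is strictly concave in $x$ near $x=\Delta/2$. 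No altitude condition rescues this; the only viable fix there is the domain restriction you mention (e.g.\ to the Voronoi cell of the serving satellite, which excludes those corners). In summary: for the even case your proof is incomplete where the paper's regrouping succeeds unconditionally, while for the odd case the obstacle you identified is real and is glossed over by the paper itself.
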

        \begin{proof}
        We will prove the convexity of $\rho_{i,j}(x,y)$ when $i$ and $j$ are even.  The other case, i.e., $i = j = 1\Mod{2}$ can be proven similarly.
        
        Let $a = i/2,~b = j/2,~\ell =  y^2 + h^2$, and $m = (y-\Delta\sqrt{3}b)^2 + h^2$. It should be noted that
        \begin{align}
            \ell-m = (2y-\Delta\sqrt{3}b) \Delta\sqrt{3}b\leq 0, \label{ineq_l_m}
        \end{align}
        for any $b\in\mathbb{Z}$.
        
         After some algebraic manipulations, it is easily verified that in order to show the convexity of $\rho_{i, j}(x,y)$ in $x$,  the following inequality must hold
        \begin{align}
        &\left(x-\Delta a\right)^2 \left(\Delta^2a^2 + 2\Delta a x + 3\ell \right) \nonumber \\
            &+ m (m-\ell + 2\Delta^2a^2- 3x^2) \geq 0  \label{eq:ineq1}
        \end{align}
        \begin{enumerate}
            \item Case I: ($a = 0$)
            After rearranging~\eqref{eq:ineq1}, we must prove the following inequality
            \begin{align}
                (m-\ell)(m-3x^2)\geq 0\label{eq:ineq01}.
            \end{align}
            Then, ~\eqref{eq:ineq01} reduces to $m \geq 3x^2$ due to \eqref{ineq_l_m}.  Since $(i, j) \neq (0, 0)$, $b$ cannot be equal to $0$. In other words, $\abs{b}\geq 1$. Note that
             \begin{align}
                m = (y-\Delta\sqrt{3}b)^2 + h^2 > (y-\Delta\sqrt{3}b)^2 \geq \frac{3\Delta^2}{4} \geq 3x^2
            \end{align}
            as we desired to prove. 
            \item Case II: $(a \neq 0)$
            
            First, note that
            \begin{align}
                \Delta^2a^2 + 2\Delta a x  \geq 0, \label{ineq_CaseII_1}
            \end{align}
            for any $a\in\mathbb{Z}$. 
            
            Secondly, as $x^2 \leq \Delta^2/4$
            \begin{align}
                2\Delta^2a^2- 3x^2 \geq 0,\label{ineq_CaseII_2}
            \end{align}
            for any $a\neq 0$. Therefore, \eqref{eq:ineq1} is true by combining \eqref{ineq_l_m}, \eqref{ineq_CaseII_1}, and \eqref{ineq_CaseII_2}.

            
             Thus, we prove the convexity of $\rho_{i,j}(x,y)$ in $x$ when $i = j =  0\Mod{2}$. The proof of convexity in $y$ can be done similarly.
        \end{enumerate}
    \end{proof}
    As a consequence of Lemma~\ref{lemma:1}, we can restrict our attention to the intervals $-\Delta/2 \leq x \leq \Delta/2$ and $-\Delta\sqrt{3}/2 \leq y \leq \Delta\sqrt{3}/2$.  We also assume $x, y \geq 0$. The other three cases regarding the signs of $x$ and $y$ can be considered similarly.
    We can write 
	\begin{align}
		&\frac{1}{\mathrm{SINR}(x, y)} = \sum_{\left(i, j\right) \in \mathcal{I}} \frac{D_{i,j}^{-\alpha}(x, y) W_{i, j}(x, y)}{D_{0,0}^{-\alpha}(x, y) W_{0, 0}(x, y)} + \frac{1}{\mathrm{SNR}(x,y)} \nonumber \\
		& =  \sum_{\left(i, j\right) \in \mathcal{I}}\frac{ W_{i, j}(x, y) }{ W_{0, 0}(x, y)}\rho^{\alpha/2}_{i, j}(x, y) + \frac{1}{\mathrm{SNR}(x,y)}.
	\end{align}
    Due to Lemma~\ref{lemma:2}, $\rho_{i,j}(x,y)$ is convex and nonnegative in $x$ and $y$. Since $t\mapsto t^{\alpha/2}$ is convex and nondecreasing on $[0,\infty)$ for $\alpha\ge 2$, the composition $\rho^{\alpha/2}_{i,j}(x,y)$ is convex in $x$ and $y$ as well. Then we can argue for any $i, j \geq 1$ that
    \begin{align}
		\rho^{\alpha/2}_{2i, 2j}(x, y) + \rho^{\alpha/2}_{2i, 2j}(-x, y) \geq 2 \rho^{\alpha/2}_{2i, 2j}(0, y).
	\end{align}
    In addition, from the second condition in Proposition~\ref{Prop:regopt}, we can write
	\begin{align}
		\frac{W_{2i, 2j}(x, y) }{W_{0, 0}(x, y)} + \frac{W_{2i, 2j}(-x, y) }{W_{0, 0}(x, y)} \geq 2 \frac{W_{2i, 2j}(0, y)}{W_{0, 0}(0, y)},
	\end{align}
	It is also clear that
	\begin{align}
		\frac{W_{2i, 2j}(x, y) }{W_{0, 0}(x, y)} & \geq \frac{W_{2i, 2j}(-x, y) }{W_{0, 0}(x, y)} = \frac{W_{-2i, 2j}(x, y) }{W_{0, 0}(x, y)},  \\
  \rho_{2i, 2j}(x, y) &\geq \rho_{2i, 2j}(-x, y) = \rho_{-2i, 2j}(x, y) . 
	\end{align}
	Then, by using the rearrangement inequality, we reach that
	\begin{align}
		&\frac{W_{2i, 2j}(x, y) }{W_{0, 0}(x, y)}\rho^{\alpha/2}_{2i, 2j}(x, y) + \frac{W_{-2i, 2j}(x, y) }{W_{0, 0}(x, y)}\rho^{\alpha/2}_{-2i, 2j}(x, y)\\
  &\geq 2 \frac{W_{2i, 2j}(0, y)}{W_{0, 0}(0, y)}\rho^{\alpha/2}_{2i, 2j}(0, y).
	\end{align}
	Similarly, we have
	\begin{align}
		&\frac{W_{2i, -2j}(x, y) }{W_{0, 0}(x, y)}\rho^{\alpha/2}_{2i, -2j}(x, y) + \frac{W_{-2i, -2j}(x, y) }{W_{0, 0}(x, y)}\rho^{\alpha/2}_{-2i, -2j}(x, y) \\
  &\geq 2 \frac{W_{2i, -2j}(0, y)}{W_{0, 0}(0, y)}\rho^{\alpha/2}_{2i, -2j}(0, y) = 2 \frac{W_{2i, 2j}(0, -y)}{W_{0, 0}(0, -y)}\rho^{\alpha/2}_{2i, 2j}(0, -y).
	\end{align}
	By using convexity in $y$ and doing the same steps above, we can claim that 
	\begin{align}
		(0,0) = \argmin_{(x,y)\in\mathbb{R}^2}\sum_{\left(i, j\right) \neq (0,0)} \frac{D_{2i,2j}^{-\alpha}(x, y) W_{2i, 2j}(x, y)}{D_{0,0}^{-\alpha}(x, y) W_{0, 0}(x, y)}.
	\end{align}
	Similarly, we can prove that 
	\begin{align}
		(0, 0) = \argmin_{(x,y)\in\mathbb{R}^2} \sum_{i, j} \frac{ W_{2i+1, 2j+1}(x, y) }{ W_{0, 0}(x, y)}\rho_{2i+1, 2j+1}(x, y).
	\end{align}
	As $(0, 0)$ is the maximizer of $\mathrm{SNR}(x, y)$, we reach the desired conclusion. \hfill $\square$

    \bibliographystyle{IEEEtran}
    \bibliography{bibfile}
    
\end{document}